\documentclass[twocolumn]{autart}    

\usepackage{cite}
\usepackage{amsmath,amssymb,amsfonts}
\usepackage{algorithm}
\usepackage{algorithmicx}
\usepackage{algpseudocode}
\usepackage{graphicx}
\usepackage{mathrsfs}
\usepackage{textcomp}
\usepackage{xcolor}
\usepackage{caption}
\usepackage{multirow} 
\usepackage{array,booktabs}
\usepackage{booktabs, tabularx, dcolumn, array}
\newcolumntype{d}[1]{D{.}{.}{#1}}
\newcolumntype{C}{>{\centering\arraybackslash}X}

\newtheorem{theorem}{Theorem}
\newtheorem{definition}{Definition} 
\newtheorem{lemma}{Lemma} 
\newtheorem{remark}{Remark}

\newtheorem{proposition}{Proposition}
\newtheorem{assumption}{Assumption}

\begin{document}

\begin{frontmatter}

\title{Optimality Robustness in Koopman-Based Control \thanksref{footnoteinfo}} 

\thanks[footnoteinfo]{This paper was not presented at any IFAC 
	meeting. Corresponding author Z.~Duan.}

\author[SAMR]{Yicheng Lin}\ead{linyc020709@stu.pku.edu.cn},
\author[SAMR]{Bingxian Wu}\ead{davidwu2003@stu.pku.edu.cn},
\author[HK]{Nan Bai}\ead{eenanbai@ust.hk},
\author[COE]{Yunxiao Ren}\ead{renyx@pku.edu.cn},
\author[SAMR]{Zhongkui Li}\ead{zhongkli@pku.edu.cn},
\author[SAMR]{Zhisheng Duan}\ead{duanzs@pku.edu.cn}

\address[SAMR]{School of Advanced Manufacturing and Robotics, Peking University, Beijing}
\address[HK]{Department of Electronic and Computer Engineering, The Hong Kong University of Science and Technology, Hong Kong}
\address[COE]{College of Engineering, Peking University, Beijing}
          
\begin{keyword}                           
Data-driven control theory, Koopman operator, nonlinear systems, robustness analysis, robust optimal control.
\end{keyword}                             

\begin{abstract}                          
The Koopman operator enables simplified representations for nonlinear systems in data-driven optimal control, but the accompanying uncertainties inevitably induce deviations in the optimal controller and associated value function. This naturally raises the question of how such uncertainty-induced optimality deviation can be quantified and mitigated.
To address this problem, we adopt a unified analysis-to-design perspective that connects the characterization of optimality robustness with its improvement through controller design.
At the analysis level, we establish a unified treatment of multiple uncertainty sources in Koopman-based control, where approximation error and noisy data are incorporated into a common robustness analysis through a norm-bounded representation.
At the design level, we develop a robustness-aware optimal control methodology that provably reduces such optimality deviations, thereby enhancing robustness while explicitly revealing a quantitative trade-off between nominal optimality and robustness.
As for practical implementation aspect, we further propose a tractable policy iteration algorithm, whose well-posedness and convergence are established via vanishing viscosity regularization and elliptic partial differential equation (PDE) techniques. Numerical examples validate the theoretical findings and demonstrate the effectiveness of proposed methodology.
\end{abstract}

\end{frontmatter}

\section{Introduction}
The Koopman operator has become a powerful framework for simplifying nonlinear system representations \cite{koopman1931hamiltonian}. Particularly, the linear Koopman operator is able to transform finite-dimensional nonlinear autonomous systems into infinite-dimensional linear ones \cite{2020Optimal}, and transform control-affine nonlinear systems into bilinear forms \cite{goswami2021bilinearization}. This has stimulated researchers to apply the well-established control theory of linear or bilinear systems to investigate nonlinear system control problems, e.g., stability analysis \cite{mauroy2013spectral,mauroy2016global}, feedback stabilization \cite{CDC2022You,lin2025integratinguncertainties} and optimization \cite{villanueva2021towards,TAC2025KoopmanHJB}. During the past decade, the outstanding potential of Koopman operator in data-driven control settings has been increasingly explored \cite{TSMC2024Resilient},\cite{YvTRO2024Autogeneration}.

In data-driven control practice via the Koopman operator, researchers typically use the well-known (extended) dynamic mode decomposition (DMD/EDMD) algorithm \cite{2017On} to approximate the Koopman operator in a finite-dimensional function space, or identify lifted models of nonlinear systems with chosen dictionary functions. These models then serve as the basis for controller synthesis, for example, using model predictive control (MPC) \cite{KORDA2018KMPC,9867811} or learning-based methods \cite{CALDARELLI2025LQRKN, 9867541}. However, due to intrinsic limitations of finite-dimensional approximation and imperfect data acquisition, various uncertainties are unavoidable in this process. Specifically, uncertainties can be categorized by sources \cite{lin2025integratinguncertainties,Strasser2024Koopman}.
\begin{itemize}
	\item \textit{Projection error arising from finite dictionary functions}, caused by truncating the infinite-dimensional system representation to a finite-dimensional one.
	\item \textit{Estimation error due to finite data collection}, arising from system identification with a finite dataset.
	\item \textit{Noise or disturbance in data collection}, where measured states are corrupted by measurement noise or external disturbances.
\end{itemize}
From the perspective of uncertainty sources, the first two categories originate from intrinsic limitations of the modeling and identification process and are collectively referred to as \textit{approximation error}, whereas the third category arises from exogenous factors in data acquisition. All these uncertainties are inherent to Koopman-based data-driven control and have a direct impact on the achieved control performance.

Recent years have witnessed growing efforts toward reducing the approximation error while learning the Koopman operator \cite{IACOB202249} and identifying the lifted systems \cite{singh2026deeprobustkoopmanlearning}. Meanwhile, several studies on Koopman-based control have investigated the impacts of different uncertainties, but primarily focusing on robust stability. For instance, probabilistic and deterministic bounds on the approximation error have been established in \cite{Strasser2024Koopman,strasser2025kernelbasederrorboundsbilinear} and subsequently used for feedback stabilization. In \cite{lin2025integratinguncertainties}, the impacts of approximation error and process disturbance were unified to ensure closed-loop robust stability. Robust Koopman-based MPC (RK-MPC) approaches were developed in \cite{9867811,ZHANG2022RKMPC} to address constraint satisfaction and robust stability under the impacts of approximation error and additive bounded noise in data collection.

Despite these advances, existing studies have largely confined robustness analysis to stability-oriented objectives, which represent only basic requirements in controller design. When optimal controllers are synthesized based on identified models without explicitly accounting for the approximation error and noise in data collection, the resulting control laws effectively solve a surrogate problem rather than the original nonlinear optimal control task. Consequently, the notion of optimality in such settings may become misleading, as there may exist substantial optimality deviations between the obtained policies and the true optimal solutions. Quantitative characterization and mitigation of the optimality deviations are therefore essential for reliable performance in data-driven optimal control.

Although classical linear-quadratic and $H_2/H_\infty$ control theories provide valuable insights into robustness and sensitivity \cite{aliyu2011nonlinear}, most notably through Riccati-based perturbation and performance analyses, these results are largely restricted to linear systems with specific structural assumptions. For general nonlinear and data-driven settings, particularly in Koopman-based control, a systematic theoretical framework for characterizing the optimality deviations remains largely unexplored.

In this work, we investigate optimality robustness in Koopman-based data-driven control by examining how uncertainties affect the value function and associated control policy through their deviations from the actual optimal solution. This optimality-oriented viewpoint complements conventional robustness analysis and robust optimal control approaches, which have primarily addressed stability preservation or constraint satisfaction. Adopting an analysis-to-design perspective, this paper integrates optimality robustness analysis with robustness-aware controller synthesis. Accordingly, the main contributions can be summarized along two main lines.
\begin{itemize}
	\item \textit{Robustness analysis}: We introduce a unified robustness analysis framework for Koopman-based data-driven control under multiple uncertainty sources. A key technical contribution is showing that energy-bounded noise in data collection can be transformed into a norm-bounded perturbation (\textbf{Theorem~\ref{Thm: Noise to Error}}), enabling noisy data and approximation error to be jointly analyzed. The resulting optimality deviations are quantitatively characterized (Propositions~\ref{Prop: performance deviation} \& \ref{Prop: controller deviation}), providing the basis for robustness-aware controller synthesis.
	
	\item \textit{Controller design}: Building on the robustness analysis, we formulate a robust optimal control methodology that explicitly accounts for worst-case uncertainties and provably reduces optimality deviations (\textbf{Theorem~\ref{Thm: Robust Controller}}). Further, we establish a fundamental trade-off between nominal optimality and robustness (\textbf{Theorem~\ref{Thm: Robust Performance}}), revealing how an acceptable loss of nominal performance yields a guaranteed increase in robustness. For practical implementation, we develop a policy iteration algorithm, whose convergence (\textbf{Theorem~\ref{Thm: Convergence}}) is established via vanishing viscosity regularization and elliptic PDE theory that overcome the inapplicability of existing results \cite{JZP2022TNNLSContraction,LEE2021PIRLDecrease}.
\end{itemize}

The optimality robustness analysis in this work is closely connected to the general theoretical results developed in our related work \cite{lin2025optimalitydeviationusingkoopman}. Different from that work, which focuses on the fundamental robustness analysis in nonlinear optimal control, the present manuscript investigates how such robustness insights can be incorporated into Koopman-based data-driven control, including uncertainty handling, controller synthesis, and computational realization.

The remainder of this paper is organized as follows. Section~\ref{2.preliminaries} reviews preliminaries on the Koopman operator and lifted system representations. Section~\ref{3.approximation error} analyzes the optimality deviation caused by the approximation error. Section~\ref{4.noise} investigates the impact of energy-bounded noise in data collection. Section~\ref{5.robust design} presents the robust optimal controller design methodology for correcting the optimality deviations, covering both theoretical formulation and practical algorithm. Section~\ref{6.numerical} validates the analysis results and proposed design methodology with numerical examples. Section~\ref{7.conclusion} concludes this paper.

\textbf{Notations:} Throughout this paper, we denote by $\mathbb{R}^n$ the $n$-dimensional Euclidean space. The norm for real vector $v\in\mathbb{R}^{n_v}$ is Euclidean norm $\|v\|=\sqrt{\sum_{i=1}^{n_v}v_i^2} =\sqrt{v^\top v}$, and the norm for real matrix $M=(m_{ij})\in \mathbb{R}^{n_r\times n_c}$ is Frobenius norm $\|M\|=\sqrt{\sum_{i=1}^{n_r} \sum_{j=1}^{n_c}m_{ij}^2}=\sqrt{\text{trace}(M^\top M)}$. The null space and column space of $M$ are denoted by $\mathrm{Nul}(M)$ and $\mathrm{Col}(M)$ respectively, while the direct sum of subspaces is denoted by $\oplus$. For a symmetric $M$, we denote by $\lambda_{\min}(M)$ and $\lambda_{\max}(M)$ its minimum and maximum eigenvalue. For two symmetric matrices $M_1,M_2$, relation $M_1 \succeq M_2$ ($M_1 \preceq M_2$) means that matrix $M_1-M_2$ is positive (negative) semidefinite.

We denote by $\mathcal{C}^0(\cdot)$ the space of continuous functions defined on the corresponding domains, $\mathcal{C}^k(\cdot)$ the space of functions with continuous (partial) derivatives up to order $k$, and $\mathcal{C}^{k,\alpha}(\cdot)$ the subspace of $\mathcal{C}^k(\cdot)$ consisting of functions whose $k$-th order (partial) derivatives are uniformly H\"older continuous \cite{gilbarg2001elliptic} with exponent $\alpha\in[0,1]$. For a measurable real-valued function $f:\Omega\rightarrow\mathbb{R}$, $L^p$-norm is $\|f\|_{L^p(\Omega)}=\left(\int_{\Omega}|f(x)|^p\right)^{1/p}$, $1\leq p\leq\infty$, and the space of functions satisfying $\|f\|_{L^p(\Omega)}<\infty$ is denoted by $L^p(\Omega)$. In particular, the space of essentially bounded measurable functions on $\Omega$ is denoted by $L^\infty(\Omega)$, where $L^\infty$-norm is $\|f\|_{L^\infty(\Omega)}=\text{ess}\sup_{x\in\Omega}|f(x)|$.

The big-O notation is primarily used to characterize how quantities depend on some key parameters (e.g., error bound coefficients), i.e., the relation $X=\mathcal{O}(Y)$ indicates that there exists $C_{XY}>0$ such that $X\leq C_{XY}Y$ holds.

\section{Preliminaries}\label{2.preliminaries}
Consider the unactuated nonlinear system
\begin{equation}\label{unactuated system}
	\dot{x}(t)=f\left(x(t)\right)
\end{equation}
defined on a state space $\mathbb{X}\subseteq\mathbb{R}^n$ and $f(0)=0$, i.e., the origin is an equilibrium of unactuated system. We define a Banach space $\mathcal{F}\subseteq \mathcal{C}^0(\mathbb{X})$ of observables $\varphi:\mathbb{X}\rightarrow\mathbb{R}$, and the Koopman operator is defined as follows \cite{mauroy2020koopman}.

\begin{definition}
	The continuous time Koopman operator $\mathcal{K}^t:\mathcal{F} \rightarrow \mathcal{F}$ is defined as
	\begin{equation}
		(\mathcal{K}^t\varphi)(x_0)=\varphi\circ S(t,x_0) \label{operator definition}
	\end{equation}
	where $\circ$ denotes the function composition and $S(t,x_0)$ denotes the flow map (solution) of system \eqref{unactuated system} at time $t>0$ with initial state $x(0)=x_0$. Furthermore, assuming $\varphi(x)\in\mathcal{C}^1(\mathbb{X})$ is continuously differentiable, it satisfies
	\begin{equation}
		\frac{\mathrm{d}\varphi(x)}{\mathrm{d}t}=\mathcal{L}_f\varphi\triangleq\lim_{t\to0}\frac{(\mathcal{K}^t\varphi-\varphi)}{t}=\nabla\varphi\cdot f \label{generator definition}
	\end{equation}
	where $\mathcal{L}_f$ is defined as the infinitesimal generator that equals to the Lie derivative with respect to $f$.
\end{definition}

Notably the Koopman operator is linear even if the system dynamics is nonlinear since for any $\alpha,\beta\in\mathbb{R}$ and $\varphi_1,\varphi_2\in\mathcal{F}$, $\mathcal{K}^t(\alpha \varphi_1+\beta \varphi_2)=\alpha \mathcal{K}^t\varphi_1 +\beta \mathcal{K}^t\varphi_2$. This linearity naturally paves the way to its spectral property, characterized by the Koopman eigenvalues and eigenfunctions \cite{mauroy2020koopman,kvalheim2021existence}.

\begin{definition}
	A Koopman eigenfunction corresponding to the unactuated system \eqref{unactuated system} is an observable $\phi_\lambda\in\mathcal{F}$ such that
	\begin{equation}
		\mathcal{K}^t\phi_\lambda=e^{\lambda t}\phi_\lambda
	\end{equation}
	for some $\lambda\in\mathbb{C}$, which is the associated Koopman eigenvalue. Additionally with \eqref{generator definition}, we obtain the following property of eigenfunctions
	\begin{equation}
		\mathcal{L}_f\phi_\lambda=\nabla\phi_\lambda\cdot f=\lambda\phi_\lambda. \label{eigenfunction definition generator}
	\end{equation}
\end{definition}
It is known that if $\phi_{\lambda_1},\phi_{\lambda_2}$ are Koopman eigenfunctions with eigenvalues $\lambda_1,\lambda_2$ respectively, $\phi_{\lambda_1}^{k_1}\phi_{\lambda_2}^{k_2}$ is also an eigenfunction with eigenvalue $k_1\lambda_1+k_2\lambda_2$, which means that there are perhaps infinitely many eigenfunctions.

Koopman eigenvalues and eigenfunctions play important roles in nonlinear system representation. Define a set of dictionary function $\Psi(x)=[\psi_1(x)\ \ldots \ \psi_N(x)]^\top\in\mathbb{R}^N$ serving as the transformation function. In order to completely capture nonlinear dynamics, $N>n$ is the usual case, so the system after transformation is often called lifted system. If a set of Koopman eigenfunctions $\Phi(x)=[\phi_{\lambda_1}(x)\ \ldots \ \phi_{\lambda_N}(x)]^\top$ is chosen to be the dictionary, \eqref{unactuated system} is transformed into a totally linear system
\begin{equation*}
	\frac{\mathrm{d}}{\mathrm{d}t}\Phi(x)=\Lambda\Phi(x)
\end{equation*}
where $\Lambda=\text{diag}(\lambda_1,\ldots,\lambda_N)$. It is simple to prove that if the selected dictionary $\Psi(x)$ is equivalent to $\Phi(x)$, i.e., $\exists T\in\mathbb{R}^{N\times N}, \text{s.t.}\Psi(x)=T\Phi(x)$ with matrix $T$ being full rank, the lifted system is also linear. Besides relying on Koopman eigenfunctions, dictionary functions can be selected via different ways, for example, adopting various polynomials \cite{TSMC2024Resilient}, or more complicated kernels \cite{Strasser2024Koopman}. Here we make some standard assumptions, which are widely-adopted in the Koopman operator researches \cite{KORDA2018KMPC,ZHANG2022RKMPC}.

\begin{assumption}
	The dictionary functions defined on a compact, forward-invariant state space $\mathbb{X}$ satisfy \\
	(a) $\{\psi_i\}_{i=1}^N$ are continuously differentiable on $\mathbb{X}$, and naturally $\Psi(x)$ is Lipschitz continuous on $\mathbb{X}$ with Lipschitz constant $L_p$. \\
	(b) $\Psi(0)=0$, $\Psi^{-1}(0)\cap\mathbb{X}=\{0\}$. \\
	(c) $\exists C\in\mathbb{R}^{n\times N}, \text{s.t. }x=C\Psi(x)=Cz$.
\end{assumption}

Now we consider the control-affine nonlinear system
\begin{equation}\label{original system}
	\dot{x}(t)=f\left(x(t)\right)+\sum_{i=1}^{m}g_i\left(x(t)\right)u_i(t)
\end{equation}
where $x\in \mathbb{R}^n$ and $u=\left[u_1\ \ldots\ u_m\right]^\top\in\mathbb{R}^m$ are the state and control input respectively, $f,g_i\in \mathcal{C}^1(\mathbb{R}^n),i=\{1,\ldots,m\}$. Strictly defining the Koopman operator for nonlinear systems with input is a nontrivial task, which is discussed in a recent work \cite{haseli2025roadskoopmanoperatortheory} in detail. But undoubtedly, the nonlinear system \eqref{original system} can be simplified via the Koopman operator using the chosen dictionary function $\Psi(x)$, whose dynamics satisfies
\begin{equation}\label{lifted system operator}
	\frac{\mathrm{d}\Psi}{\mathrm{d}t}=\frac{\partial\Psi}{\partial x}f(x)+\sum_{i=1}^mu_i \frac{\partial\Psi}{\partial x}g_i(x)=\mathcal{L}_f\Psi+\sum_{i=1}^mu_i\mathcal{L}_{g_i}\Psi.
\end{equation}
Note that although lifted linear time invariant (LTI) system representations have demonstrated empirical effectiveness in many applications, they are theoretically valid only for restricted classes of nonlinear control systems \cite{haseli2025roadskoopmanoperatortheory}. However, lifting a control-affine nonlinear system to a bilinear form has been shown feasible in \cite{goswami2021bilinearization} if the eigenspace of $\mathcal{L}_f$ is an invariant subspace of $\mathcal{L}_{g_i},i=\{1,\ldots,m\}$. If this is not satisfied, a projection error term can be introduced to guarantee the equivalence with \eqref{lifted system operator}, which can be sufficiently small as shown in \cite{goswami2021bilinearization}. Therefore, to preserve the generality of proposed results in this paper, we considered the lifted bilinear form to characterize the dynamics of original nonlinear system \eqref{original system}, i.e.,
\begin{equation}\label{lifted system AB}
	\frac{\mathrm{d}z}{\mathrm{d}t}=Az+B_0u+\sum_{i=1}^m u_iB_iz+r(z,u).
\end{equation}
In data-driven control settings, matrices $A,B_0$ and $B_i,i=\{1,\ldots,m\}$ are identified from data (see Section~\ref{4.noise}-A for identification method), and not only projection error but also estimation error resulting from finite data are contained in $r(z,u)$. The proportional bound of approximation error $r(z,u)$ given by the following assumption possesses a certain degree of generality.
\begin{assumption}\label{Assum: error bound}
	Suppose there exist constants $c_1,c_2>0$ such that the approximation error term in \eqref{lifted system AB}, including projection error and estimation error, is bounded by
	\begin{equation}\label{error bound}
		\|r(z,u)\|\leq c_1\|z\|+c_2\|u\|.
	\end{equation}
	Furthermore, the partial derivative of approximation error $\frac{\partial r}{\partial u}$ is assumed to exist and be continuous.
\end{assumption}

The error bound has been investigated in several recent papers which indicate that Assumption~\ref{Assum: error bound} is not strong also the truth in a large number of cases. For instance, a probabilistic bound for the estimation error $c_1,c_2\in \mathcal{O}(1/\sqrt{\delta d_0})$ was derived by \cite[Proposition 5]{Strasser2024Koopman}, where $\delta,d_0$ denote the probability tolerance and data amount respectively. Leveraging kernel-based methods, a deterministic bound for the approximation error was established by \cite[Theorem 5]{strasser2025kernelbasederrorboundsbilinear}. These results ensure the universality of proportional error bound relationship \eqref{error bound}, also indicate that $c_1,c_2$ are relatively small (even sufficiently small in \cite{goswami2021bilinearization}). In fact, the approximation error term can be represented by
\begin{equation}
	r(z,u)=\mathcal{L}_f\Psi-A\Psi+\sum_{i=1}^mu_i(\mathcal{L}_{g_i}\Psi-B_{0,i}-B_i\Psi)
\end{equation}
where $B_{0,i}$ denotes the $i$-th column of $B_0$, then $\frac{\partial r}{\partial u}=\sum_{i=1}^m(\mathcal{L}_{g_i}\Psi-B_{0,i}-B_i\Psi)e_i$ where $e_i$ denotes the unit vector with the $i$-th component $1$. This illustrates the continuous differentiability of $r(z,u)$ with respect to $u$. In data-driven control settings, it is likely to estimate the coefficients $c_1,c_2$ of proportional error bound \cite{lin2025optimalitydeviationusingkoopman}, which be illustrated in Section~\ref{6.numerical}.

Hamilton-Jacobi-Bellman (HJB) and Hamilton-Jacobi-Issac (HJI) equations~\cite{1995Linear,aliyu2011nonlinear} serve as fundamental analytical tools throughout this paper. In the analysis of optimality deviation and optimality robustness, we assume that the considered Hamilton-Jacobi equations admit sufficiently regular classical solutions, so that associated value functions and their gradients are well defined. Under this standing assumption, our focus is on characterizing impacts of uncertainties on optimality rather than analyzing the existance and uniqueness of the solution.

\section{Approximation Error and Optimality Deviation}\label{3.approximation error}
This paper focuses on the following infinite-horizon optimal control problem of \eqref{original system} under the quadratic performance index, i.e.,
\begin{equation}\label{optimal control problem}
	\min_{u(\cdot)} \ J(u(\cdot))=\int_{0}^{\infty}\frac{1}{2}\left[x^\top(t) \overline{Q}x(t) +u^\top(t) Ru(t)\right] \mathrm{d}t
\end{equation}
where $\overline{Q}\succ 0, R\succ 0$ are positive definite. A common scenario in data-driven control applications is that one has no exact knowledge of the approximation error term $r(z,u)$, making it challenging to design the optimal controller for the actual bilinear system \eqref{lifted system AB} that takes the error into consideration, and equivalently for the original nonlinear system \eqref{original system}. Under this circumstance, an ideal approach is to calculate the optimal controller for \eqref{lifted system AB} without considering the error, i.e.,
\begin{subequations}\label{nominal optimal control problem}
	\begin{align}
		V_0^*(z)=\min_{u(\cdot)} \int_{0}^{\infty}\frac{1}{2}\left[z^\top(t) Qz(t) +u^\top(t) Ru(t)\right] \mathrm{d}t \\
		\text{s.t.}\ \dot{z}(t)=Az(t)+B(z(t))u(t), \ z(0)=z \label{nominal trajectory}
	\end{align}
\end{subequations}
where for ease of representation we denote 
\begin{equation}
	Q=C^\top\overline{Q}C,\ B(z)=B_0+\sum_{i=1}^mB_ize_i^\top.
\end{equation}
With the well-known HJB equation \cite{1995Linear}, the \textit{nominal} optimal value function $V_0^*(z)$ is calculated with
\begin{equation}\label{nominal HJB}
	(\nabla V_0^*)^\top Az+\frac{1}{2}z^\top Qz-\frac{1}{2}(\nabla V_0^*)^\top B(z)R^{-1}B^\top(z)\nabla V_0^*=0
\end{equation}
leading to the \textit{nominal} optimal controller
\begin{equation}\label{nominal optimal controller}
	u_0^*(z)=-R^{-1}B^\top(z)\nabla V_0^*.
\end{equation}
There exist a considerable number of works on solving HJB equation and optimal control problem for bilinear systems \cite{1995Linear,TAC2023BilinearSolution}. Further developing suitable and efficient bilinear optimal control methods is a promising direction for Koopman-based data-driven optimal control.

Since the actual system dynamics \eqref{lifted system AB} contains an additional approximation error, applying $u_0^*$ to the actual system will result in performance deviation no matter how advanced the calculation method is. Define the set $\mathcal{R}$ of admissible approximation error satisfying \eqref{error bound}, i.e., 
\begin{equation}
	\mathcal{R}=\{r(z,u)|\|r(z,u)\|\leq c_1\|z\|+c_2\|u\|\}.
\end{equation}
Then we make the following standard assumption.
\begin{assumption}\label{Assum: nominal HJB}
	The nominal optimal controller $u_0^*(z)$ given by \eqref{nominal optimal controller} and the actual optimal controller $u^*(z)$ for the original nonlinear system \eqref{original system} are admissible, i.e., they asymptotically stabilize the lifted bilinear system \eqref{lifted system AB} and yield finite performance indices for admissible uncertainty $r(z,u)\in\mathcal{R}$.
\end{assumption}

First, we consider the optimality deviation in nominal value function $V_0^*(z)$ starting with the following result.
\begin{proposition}\label{Prop: performance deviation}
	Due to the existence of approximation error, applying the nominal optimal controller $u_0^*$ given by \eqref{nominal HJB} and \eqref{nominal optimal controller} to the actual system \eqref{lifted system AB} (equivalently the original nonlinear system \eqref{original system}) results in an extra cost, characterized by
	\begin{equation}\label{optimal performance deviation}
		\begin{aligned}
			\left\|V-V_0^*\right\|\leq \Delta V_{\max}=\frac{1}{2}C_{12}^2\int_{0}^\infty\|\nabla V_0^*\|^2 \mathrm{d}t \\
			+\frac{C_{12}}{2}\left(\int_{0}^\infty\|\nabla V_0^*\|^2 \mathrm{d}t\right)^{\frac{1}{2}}\sqrt{C_{12}^2\int_{0}^\infty\|\nabla V_0^*\|^2 \mathrm{d}t+4V_0^*}.
		\end{aligned}
	\end{equation}
	where $C_{12}=\max\left\{\frac{2c_1L_p}{\sqrt{\lambda_{\min}(\overline{Q})}}, \frac{2c_2}{\sqrt{\lambda_{\min}(R)}}\right\}$, $V(z)$ denotes the value function corresponding to the closed-loop system \eqref{original system} controlled by $u_0^*$, and the integral term is evaluated along the trajectory controlled by $u_0^*$ under the worst-case error $r_0^*$ given by \eqref{nominal worst-case error}, i.e.,
	\begin{equation}\label{nominal worst trajectory u0}
		\dot{z}(t)=A(z(t))+B(z(t))u_0^*(z(t))+r_0^*(z(t),u_0^*),\ z(0)=z.
	\end{equation}
\end{proposition}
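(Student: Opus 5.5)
Since $V$ is the cost of applying $u_0^*$ to the actual system \eqref{lifted system AB}, the natural route is to compare $V$ and $V_0^*$ along one and the same closed-loop trajectory and to integrate the residual of the nominal HJB equation \eqref{nominal HJB}. Let $z(t)$ solve \eqref{lifted system AB} with $u=u_0^*(z)$ and some admissible $r\in\mathcal{R}$. Along this trajectory, \eqref{nominal HJB} and \eqref{nominal optimal controller} give
\begin{equation*}
\frac{\mathrm{d}}{\mathrm{d}t}V_0^*(z)=(\nabla V_0^*)^\top\bigl(Az+B(z)u_0^*+r\bigr)=-\frac{1}{2}z^\top Qz-\frac{1}{2}(u_0^*)^\top Ru_0^*+(\nabla V_0^*)^\top r .
\end{equation*}
By Assumption~\ref{Assum: nominal HJB} the closed loop is asymptotically stable, so $V_0^*(z(t))\to 0$. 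Integrating over $[0,\infty)$ therefore yields the exact identity
\begin{equation*}
V(z)-V_0^*(z)=\int_0^\infty(\nabla V_0^*)^\top r\,\mathrm{d}t .
\end{equation*}

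The next step bounds the right-hand side. Cauchy--Schwarz in time gives
\begin{equation*}
|V-V_0^*|\leq\Bigl(\int_0^\infty\|\nabla V_0^*\|^2\,\mathrm{d}t\Bigr)^{1/2}\Bigl(\int_0^\infty\|r\|^2\,\mathrm{d}t\Bigr)^{1/2}.
\end{equation*}
To control $\|r\|^2$, apply \eqref{error bound} together with $\|z\|=\|\Psi(x)\|\le L_p\|x\|$ (Lipschitz, $\Psi(0)=0$), $\|x\|^2\le x^\top\overline{Q}x/\lambda_{\min}(\overline{Q})$, and $\|u\|^2\le u^\top Ru/\lambda_{\min}(R)$. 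This gives
\begin{equation*}
\|r\|^2\le 2c_1^2L_p^2\|x\|^2+2c_2^2\|u\|^2\le \frac{C_{12}^2}{4}\bigl(x^\top\overline{Q}x+u^\top Ru\bigr),
\end{equation*}
where the constant $2$ from $(a+b)^2\le 2a^2+2b^2$ is absorbed by the factor $2$ built into $C_{12}$, possibly with some slack. Hence $\int\|r\|^2\,\mathrm{d}t\le \tfrac{C_{12}^2}{2}V$, with constants adjusted to match the statement.

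The main step is to close the resulting implicit inequality. Writing $G=\int_0^\infty\|\nabla V_0^*\|^2\,\mathrm{d}t$ and $s=\sqrt{V}$, the bound reads roughly $s^2-V_0^*\le \kappa C_{12}\sqrt{G}\,s$ for a numerical constant $\kappa$. Solving this quadratic in $s$ gives
\begin{equation*}
s\le \tfrac12\Bigl(\kappa C_{12}\sqrt{G}+\sqrt{\kappa^2C_{12}^2G+4V_0^*}\Bigr),
\end{equation*}
and squaring and subtracting $V_0^*$ produces exactly the structure $\tfrac12C_{12}^2G+\tfrac{C_{12}}{2}\sqrt{G}\sqrt{C_{12}^2G+4V_0^*}$ of \eqref{optimal performance deviation}. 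The constant bookkeeping, i.e. choosing how the factor $2$ in $C_{12}$ is used, must be tuned so that $\kappa=1$.

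The main obstacle is that $G$ depends on the trajectory, and hence on $r$. The bound is therefore stated along the worst-case trajectory \eqref{nominal worst trajectory u0}, with $r_0^*$ the maximizer of the deviation over $\mathcal{R}$ from \eqref{nominal worst-case error}. I would argue that $\|V-V_0^*\|$ is largest at $r=r_0^*$, and that at $r_0^*$ the Cauchy--Schwarz and norm-bound steps are tight or remain valid, so that evaluating $G$ along that trajectory gives a bound uniform over $\mathcal{R}$. For negative deviations the same bound holds, since $|V-V_0^*|$ was bounded directly. Making rigorous the claim that the worst-case trajectory dominates is the delicate point. If it resists, I would fall back to stating the bound for each admissible $r$, with $G$ evaluated along the corresponding trajectory, and then specialise to $r_0^*$.
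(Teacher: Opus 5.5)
Your core computation is right, and it is the route the paper points to: the identity $J(u_0^*,z,r)-V_0^*=\int_0^\infty(\nabla V_0^*)^\top r\,\mathrm{d}t$ along a closed-loop trajectory, Cauchy--Schwarz in time, the norm bound on $\|r\|$, and a quadratic inequality in the square root of a cost. The gap is the step you leave open, and the justification you sketch for it is wrong. Tightness of Cauchy--Schwarz at $r_0^*$ would require $r_0^*$ to be a constant nonnegative multiple of $\nabla V_0^*$ along the whole trajectory, which has no reason to hold. Tightness is also not needed.

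The missing observation is the intermediate worst-case value function, the ``pivotal bridge'' the paper mentions: $V_r^*(z)=J(u_0^*,z,r_0^*)=\max_{r\in\mathcal{R}}J(u_0^*,z,r)$. By your own definition of $r_0^*$, $V\le V_r^*$ for every admissible $r$. You then run your argument once, on the trajectory \eqref{nominal worst trajectory u0}, and close the quadratic in $s=\sqrt{V_r^*}$ rather than $\sqrt{V}$:
\[
V_r^*-V_0^*=\int_0^\infty(\nabla V_0^*)^\top r_0^*\,\mathrm{d}t\le C_{12}\sqrt{G}\,\sqrt{V_r^*},\qquad G=\int_0^\infty\|\nabla V_0^*\|^2\,\mathrm{d}t .
\]
Here $G$ and $\int\|r_0^*\|^2$ are both taken along that trajectory. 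This gives $V-V_0^*\le V_r^*-V_0^*\le\Delta V_{\max}$ with no uniformity argument.

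Your constant is also off. Since $C_{12}^2\ge 4c_1^2L_p^2/\lambda_{\min}(\overline{Q})$ and $C_{12}^2\ge 4c_2^2/\lambda_{\min}(R)$, the correct pointwise bound is $\|r\|^2\le\frac{C_{12}^2}{2}(x^\top\overline{Q}x+u^\top Ru)$. Your version with $\frac{C_{12}^2}{4}$ fails in general when both terms contribute. The correct bound gives $\int_0^\infty\|r_0^*\|^2\,\mathrm{d}t\le C_{12}^2V_r^*$, as in the proof of Theorem~\ref{Thm: Robust Controller}. That is exactly $\kappa=1$, and solving $s^2-C_{12}\sqrt{G}\,s-V_0^*\le0$ reproduces $\Delta V_{\max}$ with nothing to tune.

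Your sentence on negative deviations does not hold. The ``direct'' bound you get is $|V-V_0^*|\le C_{12}\sqrt{G_rV}$, where $G_r$ and $V$ are evaluated along the trajectory driven by the actual $r$. By contrast, $\Delta V_{\max}$ uses $G$ and $V_r^*$ along \eqref{nominal worst trajectory u0}. Maximality of $r_0^*$ only bounds the signed quantity $V-V_0^*$ from above, and nothing in your argument compares $G_r$ with $G$.

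What the worst-case argument actually delivers is the extra-cost bound $V-V_0^*\le\Delta V_{\max}$. This is the direction the statement's wording (``results in an extra cost'') and its later use rely on; the remark after Proposition~\ref{Prop: controller deviation} states $V^*-V_0^*\le\Delta V_{\max}$. A genuinely two-sided bound with $G$ taken on the worst-case trajectory would need a separate domination argument for the lower deviation, which you have not supplied.
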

Here we use the notation $J(u,z,r)$ since $J$ is actually a function of the initial state $z(0)=z$, the control input $u(t)$ and the approximation error term $r(z(t),u(t))$. Apparently $V_0^*(z)=J(u_0^*,z,0)$ and $V(z)=J(u_0^*,z,r)$. Based on Proposition~\ref{Prop: performance deviation}, we can further analyze the deviation in controllers. The complete proofs of Propositions~\ref{Prop: performance deviation} and \ref{Prop: controller deviation} are omitted due to page limit (see \cite{lin2025optimalitydeviationusingkoopman} for detail).
\begin{proposition}\label{Prop: controller deviation}
	Due to the existence of approximation error, the nominal optimal controller $u_0^*$ given by \eqref{nominal HJB} and \eqref{nominal optimal controller} deviates from the actual optimal controller $u^*$ corresponding to the original nonlinear system \eqref{original system}.  Specifically, this deviation is characterized by
	\begin{equation}\label{controller deviation}
		\begin{aligned}
			&\int_{0}^{\infty}(u_0^*-u^*)^\top R(u_0^*-u^*) \mathrm{d}t \\
			\leq &2\Delta V_{\max}+2C_{12}\left[(V_0^*+\Delta V_{\max})\int_{0}^\infty\|\nabla V_0^*\|^2\mathrm{d}t\right]^\frac{1}{2}
		\end{aligned}
	\end{equation}
	where the integral term is evaluated along the actual optimal trajectory, i.e., \eqref{original system} controlled by $u^*$, $C_{12}$ and $\Delta V_{\max}$ are given by Proposition~\ref{Prop: performance deviation}.
\end{proposition}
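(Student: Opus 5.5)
We bound the controller deviation through a completion-of-squares identity along the actual optimal trajectory. Let $z(t)$ solve \eqref{lifted system AB} under $u^*$ and the true error $r$, and differentiate $V_0^*$ along it. By \eqref{nominal HJB} and \eqref{nominal optimal controller}, $(\nabla V_0^*)^\top B(z)=-u_0^{*\top}R$ and $(\nabla V_0^*)^\top Az=-\frac12 z^\top Qz+\frac12 u_0^{*\top}Ru_0^*$. Hence
\begin{equation*}
\frac{\mathrm{d}V_0^*}{\mathrm{d}t}=-\tfrac12 z^\top Qz+\tfrac12 u_0^{*\top}Ru_0^*-u_0^{*\top}Ru^*+(\nabla V_0^*)^\top r
=-\tfrac12\big(z^\top Qz+u^{*\top}Ru^*\big)+\tfrac12(u_0^*-u^*)^\top R(u_0^*-u^*)+(\nabla V_0^*)^\top r .
\end{equation*}
By Assumption~\ref{Assum: nominal HJB}, $u^*$ stabilizes the system, so $z(t)\to0$ and $V_0^*(z(t))\to0$. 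Integrating over $[0,\infty)$ and using $x=Cz$, so that $z^\top Qz=x^\top\overline{Q}x$, we obtain
\begin{equation*}
\tfrac12\int_0^\infty(u_0^*-u^*)^\top R(u_0^*-u^*)\,\mathrm{d}t = V^*(z)-V_0^*(z)-\int_0^\infty(\nabla V_0^*)^\top r\,\mathrm{d}t ,
\end{equation*}
where $V^*=J(u^*,z,r)$ is the actual optimal cost.

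The next step bounds $V^*-V_0^*$. Since $u^*$ is optimal for the true system, $V^*\le J(u_0^*,z,r)=V$. Proposition~\ref{Prop: performance deviation} gives $V\le V_0^*+\Delta V_{\max}$, so $V^*-V_0^*\le\Delta V_{\max}$. This produces the term $2\Delta V_{\max}$ once the identity is multiplied by two.

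For the cross term we use Cauchy--Schwarz, $\big|\int(\nabla V_0^*)^\top r\big|\le(\int\|\nabla V_0^*\|^2)^{1/2}(\int\|r\|^2)^{1/2}$. Assumption~\ref{Assum: error bound} and the Lipschitz bound $\|z\|=\|\Psi(x)\|\le L_p\|x\|$ give
\begin{equation*}
\|r\|^2\le 2c_1^2L_p^2\|x\|^2+2c_2^2\|u\|^2\le \tfrac{C_{12}^2}{2}\big(x^\top\overline{Q}x+u^\top Ru\big),
\end{equation*}
where the last step uses the definition of $C_{12}$. Consequently $\int\|r\|^2\le C_{12}^2V^*\le C_{12}^2(V_0^*+\Delta V_{\max})$. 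Multiplying the identity by $2$ then yields exactly the second term of \eqref{controller deviation}, with the gradient integral evaluated along the actual optimal trajectory, as the statement requires.

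The main obstacle is the identity step. It requires $V_0^*\in\mathcal{C}^1$ along the trajectory under the true error, the limit $V_0^*(z(t))\to0$, and finiteness of $\int\|\nabla V_0^*\|^2$. These follow from the standing regularity assumption and admissibility, but must be checked. A further care point is the sign: we may only use the upper bound $V^*-V_0^*\le\Delta V_{\max}$, with the cross term bounded in absolute value, so that no lower bound on $V^*$ is needed. The constant matching in the inequality for $\|r\|^2$ ($2c_1^2L_p^2\le\frac{C_{12}^2}{2}\lambda_{\min}(\overline{Q})$) should also be verified. If it fails by a factor, that factor is absorbed into the definition of $C_{12}$.
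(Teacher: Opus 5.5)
Your proof is correct and follows essentially the route the paper indicates, namely the nominal counterpart of the proof of Theorem~\ref{Thm: Robust Controller}: differentiate $V_0^*$ along the actual optimal trajectory, use $V^*\le J(u_0^*,z,r)\le V_0^*+\Delta V_{\max}$, and bound the cross term by Cauchy--Schwarz with $\int_0^\infty\|r\|^2\mathrm{d}t\le C_{12}^2V^*$, which gives exactly the intermediate bound $2\Delta V_{\max}+2C_{12}\bigl(V^*\int_0^\infty\|\nabla V_0^*\|^2\mathrm{d}t\bigr)^{1/2}$ quoted in the paper's remark and hence the statement. The constant check you flagged holds exactly, since $C_{12}^2/2\ge 2c_1^2L_p^2/\lambda_{\min}(\overline{Q})$ and $C_{12}^2/2\ge 2c_2^2/\lambda_{\min}(R)$ by the definition of $C_{12}$; this matters because $C_{12}$ is fixed by Proposition~\ref{Prop: performance deviation}, so your fallback of absorbing a factor into it would not have been permissible, though it is not needed.
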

\begin{remark}[Conservatism of LTI lifting]\label{Rem: LTI lifting}
	Although many existing works on Koopman-based data-driven control consider LTI lifted models $\dot{z}=Az+B_0u$, the theoretical justification of LTI lifting is limited (see \cite{goswami2021bilinearization,STRASSER2026101035}). Further, the conservatism of LTI lifting can be illustrated by the above analysis. Suppose the approximation error using lifted bilinear models is bounded by $\|r_b(z,u)\|\leq c_{b,1}\|z\|+c_{b,2}\|u\|$. For LTI lifting, an additional bilinear term $r_l(z,u) =\sum_{i=1}^mu_iB_iz$ is absorbed into the error, and one may determine coefficients $c_{l,1},c_{l,2}$ such that $\|r_l(z,u)\|\leq c_{l,1}\|z\|+c_{l,2}\|u\|$. The overall error coefficients $c_1=c_{b,1}+c_{l,1},c_2=c_{b,2}+c_{l,2}$ might be substantially enlarged, especially when $z$ or $u$ is moderately large. In view of Proposition~\ref{Prop: performance deviation} where the optimality deviation depends on $c_1,c_2$, this accumulation of error indicates that linear lifting may induce significantly amplified performance degradation compared with bilinear representations.
\end{remark}
\begin{remark}[Underlying novel analytical strategy]
	The proof of Proposition~\ref{Prop: controller deviation} in \cite{lin2025optimalitydeviationusingkoopman} also confirms that the deviation between the actual value function $V^*=J(z,u^*,r)$ and $V_0^*$ also satisfies $V^*-V_0^*\leq \Delta V_{\max}$. Note that as directly connecting $V_0^*(z)$ and $V^*(z)$, $u_0^*$ and $u^*$ is relatively challenging, the intermediate worst-case value function $V_r^*(z)$ serves as a pivotal bridge for the optimality deviation analysis. See \cite{lin2025optimalitydeviationusingkoopman} for detailed discussions.
\end{remark}

With the above analysis, we have made the abstract notion of \textit{optimality deviation} concrete and numerically computable. The derived upper bounds \eqref{optimal performance deviation}, \eqref{controller deviation} are explicitly parameterized by the error coefficients $(c_1,c_2)$ and nominal optimal value function $V_0^*$. As the closed-loop trajectories can be simulated, the resulting bounds can be efficiently computed offline, as demonstrated in Section~\ref{6.numerical}. This structure directly links the approximation error to the resulting performance loss, providing a practical pathway to estimate these deviations when the exact form of the approximation error is unknown.

\section{Bounded Noise and Optimality Deviation}\label{4.noise}
The Koopman operator has been widely used in data-driven control of nonlinear systems, while the collected data are often corrupted by noise. In this section, we will analyze the impact of noise in data collection on the optimality deviation based on the bounded-energy assumption. It is quite interesting that the following analysis successfully links the impact of bounded noise with the approximation error coefficient $c_1,c_2$, thereby integrating different kinds of uncertainties into a unified framework to analyze and correct the optimality deviation.

\subsection{System Identification with Noisy Data}
First we clarify that the optimal control problem we are interested in remains unchanged, i.e., our objective is still to minimize the quadratic performance index given by \eqref{optimal control problem} for the same unknown system \eqref{original system}. However, noisy data $\left\{x(t_j), u(t_j), \dot{x}(t_j)\right\}_{j=0}^{T-1}$ are collected from noise-corrupted trajectories, described by
\begin{equation*}
	\dot{x}(t)=f\left(x(t)\right)+\sum_{i=1}^{m}g_i\left(x(t)\right)u_i(t)+\overline{d}(t),
\end{equation*}
where $\overline{d}(t)$ represents the effect of noise in data collection rather than an external disturbance acting on system dynamics. With the dictionary function $\Psi(x)$, we obtain
\begin{equation*}
	\frac{\mathrm{d}z}{\mathrm{d}t}=Az+B_0u+\sum_{i=1}^m u_iB_iz+r(z,u)+d(t)
\end{equation*}
where $A,B_0,B_i,i=\{1,\ldots,m\}$ are identified from data and $d(t)=\frac{\partial\Psi(x)}{\partial x}\cdot \overline{d}(t)$ denotes the noise-induced modeling residual. The state measurements and corresponding time derivatives can also be computed via $z(t_j)=\Psi(x(t_j))$ and $\dot{z}(t_j)=\frac{\partial\Psi(x)}{\partial x}\cdot \dot{x}(t_j)$. Construct matrices
\begin{equation}\label{data set}
	\begin{aligned}
		U_0:&=\left[u(t_0) \ u(t_1) \ \ldots \ u(t_{T-1})\right] \\
		Z_0:&=\left[z(t_0) \ z(t_1) \ \ldots \ z(t_{T-1})\right] \\
		V_0^i:&=\left[u_i(t_0)z(t_0) \ u_i(t_1)z(t_1) \ \ldots \ u_i(t_{T-1})z(t_{T-1})\right] \\
		Z_1:&=\left[\dot{z}(t_0) \ \dot{z}(t_1) \ \ldots \ \dot{z}(t_{T-1})\right] \\
		W_0:&=\left[Z_0^\top \ U_0^\top \ (V_0^1)^\top \ \ldots \ (V_0^m)^\top\right]^\top
	\end{aligned}
\end{equation}
where $u_i(t_j)$ denotes the $i$-th component of $u(t_j)$. Further, we arrange the unknown noise sequence as
\begin{equation}
	D_0:=\left[d(t_0) \ d(t_1) \ \ldots \ d(t_{T-1})\right].
\end{equation}
\begin{assumption}\label{Assum: Disturbance}
	Without loss of generality, we assume: \\
	(a) The matrix $W_0\in\mathbb{R}^{N\times T}$ is of full row rank. \\
	(b) $D_0\in\mathcal{D}$ such that for some matrix $\Delta$,
	\begin{equation}\label{disturbance energy}
		\mathcal{D}:=\left\{D\in\mathbb{R}^{N\times T}:DD^\top\preceq\Delta\Delta^\top\right\}.
	\end{equation}
\end{assumption}
\begin{remark}[Reasonability of assumption]
	Here we make a brief explanation. Assumption~\ref{Assum: Disturbance}.(a) is similar to the notion \textit{persistence of excitation} in data-driven control of linear systems, which promises the quality of data and the uniqueness of least square solution in system identification (see \eqref{least square problem}). Assumption~\ref{Assum: Disturbance}.(b) exhibits a general and widely used energy bound of noise or disturbance \cite{BISOFFI2022Petersen}.
\end{remark}

In Koopman-based data-driven control, EDMD is widely used for system identification, with basic form \cite{2017On}
\begin{equation}\label{least square problem}
	\min_{A,B_0,B_1,\ldots,B_m}\left\|Z_1-\left[A\ B_0\ B_1\ \ldots\ B_m\right]W_0\right\|
\end{equation}
which is a least square problem solved by
\begin{equation}\label{identified Z}
	\left[{A}\ {B}_0\ {B}_1\ \ldots\ {B}_m\right]:={\mathbf{Z}}^\top =Z_1W_0^\dagger.
\end{equation}
The approximation error of EDMD, including projection and estimation error, is interpreted in $r(z,u)$. When considering the impact of noise, least-square solution \eqref{identified Z} deviates from matrices corresponding to the actual lifted dynamics, which should be theoretically solved by
\begin{equation}\label{actual Z}
	\left[\tilde{A}\ \tilde{B}_0\ \tilde{B}_1\ \ldots\ \tilde{B}_m\right]:=\tilde{\mathbf{Z}}^\top =(Z_1-D_0)W_0^\dagger.
\end{equation}
Since we have no exact knowledge of $D_0$ but the energy bound $\Delta\Delta^\top$ is known a priori, we should investigate all the system matrices \eqref{actual Z} satisfying $D_0\in \mathcal{D}$, rather than the single solution \eqref{identified Z} ignoring the impact of noise. This leads to the set $\mathcal{Z}_0$ of matrices \textit{consistent with noisy data}
\begin{equation}
	\mathcal{Z}_0:=\left\{\tilde{\mathbf{Z}}^\top=(Z_1-D_0)W_0^\dagger: D_0\in \mathcal{D}\right\},
\end{equation}
i.e., the set of all pairs of matrices $[A\ B_0\ B_1\ \ldots\ B_m]$ that can generate the noisy data $W_0,Z_1$.

\subsection{From Noise Bound to Optimality Deviation}
The lifted bilinear system \eqref{lifted system AB} is identified via EDMD algorithm \eqref{least square problem}-\eqref{identified Z}, but the solution might be inaccurate due to the existence of noise. However, all of the possible matrices corresponding to the actual lifted dynamics are recorded in $\mathcal{Z}_0$. We first introduce the following result.
\begin{proposition}\label{Prop: Set Reformulation}
	Define the sets
	\begin{subequations}
		\begin{align}
			&\mathcal{Z}:=\left\{\tilde{\mathbf{Z}}^\top: (\tilde{\mathbf{Z}}- \boldsymbol{\zeta})^\top \mathbf{A}(\tilde{\mathbf{Z}}-\boldsymbol{\zeta})\preceq \mathbf{Q} \right\} \\
			&\mathcal{E}:=\left\{(\boldsymbol{\zeta}+\mathbf{A}^{-\frac{1}{2}}\boldsymbol{\gamma}\mathbf{Q}^\frac{1}{2})^\top:\|\boldsymbol{\gamma}\|\leq 1\right\}
		\end{align}
	\end{subequations}
	where $\mathbf{A}=W_0W_0^\top,\boldsymbol{\zeta}=(Z_1W_0^\dagger)^\top=(W_0W_0^\top)^{-1} W_0Z_1^\top$ and $\mathbf{Q}=\Delta\Delta^\top$. Then $\mathcal{Z}_0\subseteq\mathcal{Z} =\mathcal{E}$.
\end{proposition}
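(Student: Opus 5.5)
The proof has two parts: the inclusion $\mathcal{Z}_0\subseteq\mathcal{Z}$, and the set identity $\mathcal{Z}=\mathcal{E}$, which is a matrix-ellipsoid parametrization in the spirit of \cite{BISOFFI2022Petersen}.

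\textbf{Inclusion $\mathcal{Z}_0\subseteq\mathcal{Z}$.} Take $\tilde{\mathbf{Z}}^\top=(Z_1-D_0)W_0^\dagger$ with $D_0\in\mathcal{D}$. By Assumption~\ref{Assum: Disturbance}.(a), $W_0$ has full row rank, so $\mathbf{A}=W_0W_0^\top\succ 0$ and $W_0^\dagger=W_0^\top(W_0W_0^\top)^{-1}$. Hence
\[
\tilde{\mathbf{Z}}-\boldsymbol{\zeta}=-(W_0W_0^\top)^{-1}W_0D_0^\top,
\]
and therefore
\[
(\tilde{\mathbf{Z}}-\boldsymbol{\zeta})^\top\mathbf{A}(\tilde{\mathbf{Z}}-\boldsymbol{\zeta})=D_0\Pi D_0^\top,\qquad \Pi:=W_0^\top(W_0W_0^\top)^{-1}W_0 .
\]
The matrix $\Pi$ is an orthogonal projector, so $\Pi\preceq I_T$. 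This gives $D_0\Pi D_0^\top\preceq D_0D_0^\top\preceq\Delta\Delta^\top=\mathbf{Q}$, i.e.\ $\tilde{\mathbf{Z}}^\top\in\mathcal{Z}$.

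\textbf{$\mathcal{E}\subseteq\mathcal{Z}$.} Let $\tilde{\mathbf{Z}}=\boldsymbol{\zeta}+\mathbf{A}^{-1/2}\boldsymbol{\gamma}\mathbf{Q}^{1/2}$ with $\|\boldsymbol{\gamma}\|\le1$. Then
\[
(\tilde{\mathbf{Z}}-\boldsymbol{\zeta})^\top\mathbf{A}(\tilde{\mathbf{Z}}-\boldsymbol{\zeta})=\mathbf{Q}^{1/2}\boldsymbol{\gamma}^\top\boldsymbol{\gamma}\mathbf{Q}^{1/2}.
\]
The Frobenius norm dominates the spectral norm, so $\boldsymbol{\gamma}^\top\boldsymbol{\gamma}\preceq\|\boldsymbol{\gamma}\|^2I\preceq I$, and the right-hand side is $\preceq\mathbf{Q}$.

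\textbf{$\mathcal{Z}\subseteq\mathcal{E}$.} Set $Y=\mathbf{A}^{1/2}(\tilde{\mathbf{Z}}-\boldsymbol{\zeta})$, so the constraint reads $Y^\top Y\preceq\mathbf{Q}$. We need a $\boldsymbol{\gamma}$ with $Y=\boldsymbol{\gamma}\mathbf{Q}^{1/2}$.

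If $\mathbf{Q}\succ0$, take $\boldsymbol{\gamma}=Y\mathbf{Q}^{-1/2}$. Then $\boldsymbol{\gamma}^\top\boldsymbol{\gamma}\preceq I$.

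If $\mathbf{Q}$ is singular, $Y^\top Y\preceq\mathbf{Q}$ forces $Y v=0$ for every $v\in\mathrm{Nul}(\mathbf{Q})$. Using $\mathbb{R}^N=\mathrm{Col}(\mathbf{Q})\oplus\mathrm{Nul}(\mathbf{Q})$, take $\boldsymbol{\gamma}=Y(\mathbf{Q}^{1/2})^\dagger$. Then $\boldsymbol{\gamma}\mathbf{Q}^{1/2}=Y P_{\mathrm{Col}(\mathbf{Q})}=Y$, and
\[
\boldsymbol{\gamma}^\top\boldsymbol{\gamma}\preceq (\mathbf{Q}^{1/2})^\dagger\mathbf{Q}(\mathbf{Q}^{1/2})^\dagger=P_{\mathrm{Col}(\mathbf{Q})}\preceq I.
\]

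\textbf{Main obstacle.} The gap is the norm. The argument so far yields only the spectral bound $\|\boldsymbol{\gamma}\|_2\le1$, whereas the Notations fix $\|\cdot\|$ as the Frobenius norm, which is larger. So as written, $\mathcal{Z}\subseteq\mathcal{E}$ holds only if $\|\boldsymbol{\gamma}\|$ in the definition of $\mathcal{E}$ is read as the induced 2-norm. Under the Frobenius reading, $\mathcal{E}$ is a proper subset of $\mathcal{Z}$ in general. For example, with $\mathbf{Q}=I$, $\mathbf{A}=I$ and $\boldsymbol{\gamma}=I_N$ with $N\ge2$, the constraint $\boldsymbol{\gamma}^\top\boldsymbol{\gamma}\preceq I$ holds, but $\|\boldsymbol{\gamma}\|_F=\sqrt N>1$.

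I would state the result with $\|\boldsymbol{\gamma}\|$ meaning the spectral norm, so that $\mathcal{Z}=\mathcal{E}$ is exact. If the Frobenius norm must be kept, only $\mathcal{E}\subseteq\mathcal{Z}$ survives, and the statement should be weakened accordingly. The remaining steps, including the pseudoinverse bookkeeping when $\mathbf{Q}$ is singular, are routine.
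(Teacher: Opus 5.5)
Your argument for $\mathcal{Z}_0\subseteq\mathcal{Z}$, via the orthogonal projector $\Pi=W_0^\top(W_0W_0^\top)^{-1}W_0\preceq I_T$, is exactly the paper's. For $\mathcal{Z}=\mathcal{E}$ the routes differ. The paper gives no argument of its own and simply invokes \cite[Proposition 1]{BISOFFI2022Petersen}. You instead prove both inclusions directly through $Y=\mathbf{A}^{1/2}(\tilde{\mathbf{Z}}-\boldsymbol{\zeta})$ and $\boldsymbol{\gamma}=Y(\mathbf{Q}^{1/2})^\dagger$, including the singular-$\mathbf{Q}$ case, and each of those steps checks out.

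The hands-on route pays off precisely on the issue you flag, and your diagnosis is correct. The identity rests on $\|\boldsymbol{\gamma}\|_2\le 1\iff\boldsymbol{\gamma}^\top\boldsymbol{\gamma}\preceq I$, so it holds only when $\|\boldsymbol{\gamma}\|$ is the induced 2-norm, which is the reading under which the cited parametrization is valid. Under the Frobenius norm fixed in the paper's Notations, only $\mathcal{E}\subseteq\mathcal{Z}$ survives. The paper's one-line citation silently switches norm conventions, and your version makes that visible.

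Two small refinements to your counterexample.
\begin{itemize}
\item $\boldsymbol{\gamma}$ has as many rows as $W_0$, namely $N+m+mN$, even though the paper writes $W_0\in\mathbb{R}^{N\times T}$. So replace $\boldsymbol{\gamma}=I_N$ by a matrix $U$ with orthonormal columns.
\item You need not assume $\mathbf{A}=\mathbf{Q}=I$, which the structured data matrix may not realize. For any $\mathbf{A}\succ 0$ and any $\mathbf{Q}$ with $\mathrm{rank}\,\mathbf{Q}\ge 2$, the matrix $\tilde{\mathbf{Z}}^\top$ with $\tilde{\mathbf{Z}}=\boldsymbol{\zeta}+\mathbf{A}^{-1/2}U\mathbf{Q}^{1/2}$ lies in $\mathcal{Z}$. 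Every $\boldsymbol{\gamma}$ representing it equals $UP_{\mathrm{Col}(\mathbf{Q})}$ plus a matrix with rows in $\mathrm{Nul}(\mathbf{Q})$. Hence every such $\boldsymbol{\gamma}$ has Frobenius norm at least $\sqrt{\mathrm{rank}\,\mathbf{Q}}>1$.
\end{itemize}

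Finally, adopting the spectral reading costs nothing downstream. The proof of Theorem~\ref{Thm: Noise to Error} only needs $\|\mathbf{Q}^{1/2}\boldsymbol{\gamma}^\top\mathbf{A}^{-1/2}\|_F\le\|\mathbf{Q}^{1/2}\|_F\,\|\boldsymbol{\gamma}\|_2\,\|\mathbf{A}^{-1/2}\|_F$, which still yields $c_d$.
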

See Appendix~\ref{Proof: Set Reformulation} for the proof. Proposition~\ref{Prop: Set Reformulation} is a further reformulation of the matrix set $\mathcal{C}_0$ consistent with noisy data. Subsequently, the impact of noise is interpreted as a bounded perturbation of $\boldsymbol{\zeta}$ which corresponds to the least-square solution \eqref{identified Z} of bilinear system identification under noise-free case. Therefore, it is convenient to incorporate the impact of noise into error term $r(z,u)$.
\begin{theorem}\label{Thm: Noise to Error}
	Due to the existence of approximation error and noise in data collection, there is deviation between the identified bilinear system using \eqref{identified Z} and the actual lifted bilinear system given by \eqref{lifted system AB}. The overall error $r(z,u)=r_a(z,u)+r_d(z,u)$ captures the impacts of approximation error and noise, while 
	\begin{equation}\label{noise error bound}
		\|r_d(z,u)\|\leq c_d\left(\|z\|+\|u\|+\|z\|\|u\|\right)
	\end{equation}
	holds where the noise-related coefficient is defined as
	\begin{equation}
		c_d=\left\|\mathbf{Q}^{\frac{1}{2}}\right\|\left\|\mathbf{A}^{-\frac{1}{2}}\right\| =\left\|(\Delta\Delta^\top)^{\frac{1}{2}}\right\|\left\|(W_0W_0^\top)^{-\frac{1}{2}}\right\|.
	\end{equation}
\end{theorem}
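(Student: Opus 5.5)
Theorem~\ref{Thm: Noise to Error} follows from Proposition~\ref{Prop: Set Reformulation}: the true system matrices lie in $\mathcal{E}$, so they differ from the EDMD estimate $\boldsymbol{\zeta}^\top$ by a structured, norm-bounded perturbation. By \eqref{actual Z} and Proposition~\ref{Prop: Set Reformulation}, the matrices $[\tilde A\ \tilde B_0\ \tilde B_1\ \ldots\ \tilde B_m]=\tilde{\mathbf{Z}}^\top$ generating the actual lifted dynamics belong to $\mathcal{Z}_0\subseteq\mathcal{E}$. Hence there is a $\boldsymbol{\gamma}$ with $\|\boldsymbol{\gamma}\|\leq 1$ such that
\begin{equation*}
\tilde{\mathbf{Z}}^\top-\mathbf{Z}^\top=(\mathbf{A}^{-\frac{1}{2}}\boldsymbol{\gamma}\mathbf{Q}^{\frac{1}{2}})^\top=\mathbf{Q}^{\frac{1}{2}}\boldsymbol{\gamma}^\top\mathbf{A}^{-\frac{1}{2}},
\end{equation*}
using symmetry of $\mathbf{A}^{-1/2}$ and $\mathbf{Q}^{1/2}$. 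I would partition this difference into blocks $[\delta A\ \delta B_0\ \delta B_1\ \ldots\ \delta B_m]$, conformal with the row blocks of $W_0$.

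Next, I write the true lifted dynamics as the identified dynamics plus two error terms:
\begin{equation*}
\dot z=Az+B_0u+\sum_{i}u_iB_iz+r_a(z,u)+r_d(z,u),
\end{equation*}
where
\begin{equation*}
r_d(z,u)=\delta A\,z+\delta B_0\,u+\sum_{i}u_i\,\delta B_i\,z=(\tilde{\mathbf{Z}}^\top-\mathbf{Z}^\top)\,w,\qquad w=\left[z^\top\ u^\top\ u_1z^\top\ \ldots\ u_mz^\top\right]^\top,
\end{equation*}
and $r_a$ is the approximation error of the true lifted bilinear model, as in \eqref{lifted system AB}. The vector $w$ is exactly the data regressor structure used to build $W_0$ in \eqref{data set}.

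Then I bound by submultiplicativity of the Frobenius norm:
\begin{equation*}
\|r_d\|\leq\|\mathbf{Q}^{\frac{1}{2}}\|\,\|\boldsymbol{\gamma}\|\,\|\mathbf{A}^{-\frac{1}{2}}\|\,\|w\|\leq c_d\|w\|.
\end{equation*}
It remains to bound the regressor norm:
\begin{equation*}
\|w\|\leq\|z\|+\|u\|+\Big(\sum_i u_i^2\|z\|^2\Big)^{1/2}=\|z\|+\|u\|+\|u\|\|z\|,
\end{equation*}
where the first inequality uses $\sqrt{a^2+b^2+c^2}\leq a+b+c$ for nonnegative $a,b,c$. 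Combining the two displays gives \eqref{noise error bound}.

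The main obstacle is fixing the transpose and ordering conventions. First, one must check that $\mathcal{E}$ is parametrized so that the perturbation multiplies $w$ on the correct side. Second, one must check that $\mathbf{A}^{-1/2}$, which has size $N'\times N'$ with $N'$ the row dimension of $W_0$, is compatible with $w$. Third, the bound $\|\boldsymbol{\gamma}\|\leq 1$ is stated in the norm used in Proposition~\ref{Prop: Set Reformulation}; if that is the spectral rather than the Frobenius norm, the same chain still holds, since $\|XY\|_F\leq\|X\|_F\|Y\|_2$.
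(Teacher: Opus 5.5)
Your proof is correct and follows essentially the paper's route. Proposition~\ref{Prop: Set Reformulation} writes the data-consistent perturbation as $\mathbf{Q}^{\frac{1}{2}}\boldsymbol{\gamma}^\top\mathbf{A}^{-\frac{1}{2}}$ with Frobenius norm at most $c_d$; the only difference is that you apply submultiplicativity once to the stacked regressor $w$, getting the slightly sharper intermediate bound $c_d\|w\|$, whereas the paper splits $r_d$ into $\Delta A\,z$, $\Delta B_0\,u$ and $[\Delta B_1z\ \ldots\ \Delta B_mz]u$ and bounds each block by $c_d$ separately.

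Your closing remark on the norm of $\boldsymbol{\gamma}$ is well taken. For $\mathbf{Q}\succ 0$, membership in $\mathcal{Z}$ amounts to $\boldsymbol{\gamma}^\top\boldsymbol{\gamma}\preceq I$, which is a spectral-norm bound on $\boldsymbol{\gamma}$, not a Frobenius one. The inequality $\|XY\|_F\le\|X\|_F\|Y\|_2$ is therefore exactly what makes the step to $c_d$ rigorous, in your argument and in the paper's.
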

\begin{pf}
	We have demonstrated that the impact of noise in data collection lies in the perturbation of identified system matrices $\mathbf{Z}^\top=[A\ B_0\ B_1\ \ldots \ B_m]$. Here we use notation $[\Delta A\ \Delta B_0\ \Delta B_1\ \ldots \ \Delta B_m]$ for the perturbation. Hence, noise-induced deviation between identified and actual lifted bilinear systems is characterized by
	\begin{equation*}
		r_d(z,u)=\Delta A\cdot z+\Delta B_0\cdot u+[\Delta B_1z\ \ldots\ \Delta B_mz]u,
	\end{equation*}
	which can be bounded by
	\begin{equation}\label{noise norm derivation}
		\begin{aligned}
			\|r_d(z,u)\|&\leq \|\Delta A\|\|z\|+\|\Delta B_0\|\|u\| \\
			&+\|[\Delta B_1z\ \ldots\ \Delta B_mz]\|\|u\|.
		\end{aligned}
	\end{equation}
	Since $[A+\Delta A\ B_0+\Delta B_0\ B_1+\Delta B_1\ \ldots \ B_m+\Delta B_m]\in\mathcal{C}_0 \subseteq \mathcal{C}=\mathcal{E}$, there exists a $\boldsymbol{\gamma}$, such that $\|\boldsymbol{\gamma}\|\leq 1$ and
	\begin{equation*}
		[\Delta A\ \Delta B_0\ \Delta B_1\ \ldots \ \Delta B_m]=\mathbf{Q}^{\frac{1}{2}} \boldsymbol{\gamma}^\top \mathbf{A}^{-\frac{1}{2}}.
	\end{equation*}
	With the definition of Frobenius norm, all three norms, $\|\Delta A\|$, $\|\Delta B_0\|$, $\|[\Delta B_1\ \ldots\ \Delta B_m]\|$, are no greater than
	\begin{equation*}
		\left\|[\Delta A\ \Delta B_0\ \Delta B_1\ \ldots \ \Delta B_m]\right\|\leq \left\|\mathbf{Q}^{\frac{1}{2}}\right\|\cdot\left\|\mathbf{A}^{-\frac{1}{2}}\right\|=c_d.
	\end{equation*}
	Additionally, since
	\begin{equation*}
		\begin{aligned}
			\left\|[\Delta B_1z\ \ldots\ \Delta B_mz]\right\|^2=\sum_{i=1}^m\|\Delta B_iz\|^2&\leq \|z\|^2 \sum_{i=1}^m\|\Delta B_i\|^2\\
			=\|z\|^2\left\|[\Delta B_1\ \ldots\ \Delta B_m]\right\|^2 &\leq c_d^2\|z\|^2
		\end{aligned}
	\end{equation*}
	together with \eqref{noise norm derivation}, the upper bound for \eqref{noise error bound} holds. \qed
\end{pf}

In Koopman-based data-driven control practice, we usually consider a compact, forward-invariant state space $x\in\mathbb{X}$ without compromising the existence of optimal controller (or the solution of HJB/HJI equation). Then, \eqref{noise error bound} can be over-approximated by proportional error bound similar to \eqref{error bound}, e.g.,
\begin{equation}
	\|r_d(z,u)\|\leq c_d\left(1+\max_{x\in\mathbb{X}}\|\Psi(x)\|\right)\|z\|+c_d\|u\|.
\end{equation}
This observation explains why the impact of noise in data collection can be incorporated into the error bound \eqref{error bound}. Accordingly, in the following text we will continue our discussions based on the proportional error bound \eqref{error bound}, which is assumed to capture the impacts of all uncertainties. This formulation is justified by the existence of $c_{a,1},c_{a,2}>0$ such that
\begin{equation}
	\|r_a(z,u)\|\leq c_{a,1}\|z\|+c_{a,2}\|u\|.
\end{equation}
Let $c_1=c_{a,1}+c_d\left(1+\max_{x\in\mathbb{X}}\|\Psi(x)\|\right)$ and $c_2=c_{a,2}+c_d$, then the impacts of approximation error and noise in data collection are jointly captured by $r(z,u)$ bounded by \eqref{error bound}. Nevertheless, this treatment might introduce certain conservatism as it relies on a bounded state space and neglects part of the structural information in $[\Delta A\ \Delta B_0\ \Delta B_1\ \ldots \ \Delta B_m]$. Exploring less conservative characterizations for noise or disturbance is an interesting direction for future work.

\section{Correcting the Optimality Deviation}\label{5.robust design}
Building upon the characterization of optimality deviations due to the existence of uncertainties, a natural yet critical question arises: how can such deviations be systematically mitigated at the controller design stage? The analysis in Section~\ref{3.approximation error} has revealed the \textit{worst-case} impact of approximation error, and we have illustrated that noise in data collection can be unified within the same analytical framework. These results motivate a shift from conventional stability robustness to an \textit{optimality robustness} perspective.

From this viewpoint, robustness is interpreted in terms of the magnitude of optimality deviations and the controller’s capability to mitigate them in the presence of uncertainties. Consequently, mitigating optimality deviations amounts to designing controllers that explicitly counteract the worst-case impacts of uncertainties, which naturally leads to a robust optimal control formulation in a min-max optimization form
\begin{subequations}\label{robust optimal control problem}
	\begin{align}
		V_{ro}^*(z)=\min_{u(\cdot)}\max_{r\in\mathcal{R}} \int_{0}^{\infty}\frac{1}{2}\left[z^\top(t) Qz(t) +u^\top Ru\right] \mathrm{d}t \\
		\text{s.t.}\ \dot{z}(t)=Az(t)+B(z(t))u+r(z(t),u), \ z(0)=z. \label{actual dynamics}
	\end{align}
\end{subequations}
Using HJI equation \cite{aliyu2011nonlinear}, the robust optimal value function $V_{ro}^*$ should be solved from
\begin{equation*}
	\begin{aligned}
		\min_{u}\max_{r\in\mathcal{R}}\{(\nabla V_{ro}^*)^\top[Az+B(z)u +r(z,u)] \\ +\frac{1}{2}z^\top Qz+\frac{1}{2}u^\top Ru\}=0.
	\end{aligned}
\end{equation*}
Therefore, the worst-case approximation error $r^*$ (similar to $r_0^*$) and robust optimal controller $u_{ro}^*$ satisfy
\begin{equation}\label{robust optimal controller}
	\begin{aligned}
		&r^*(z,u_{ro}^*)=\frac{c_1\|z\|+c_2\|u_{ro}^*\|}{\|\nabla V_{ro}^*\|}\nabla V_{ro}^*,\\
		&u_{ro}^*(z)=-R^{-1}B^\top(z)\nabla V_{ro}^*-\frac{c_2\|\nabla V_{ro}^*\|}{\|u_{ro}^*\|}R^{-1}u_{ro}^*.
	\end{aligned}
\end{equation}
where the robust optimal value function is solved by
\begin{equation}\label{robust HJI}
	\begin{aligned}
		0=\left(\nabla V_{ro}^*\right)^\top Az+\frac{1}{2}z^\top Qz-\frac{1}{2}(\nabla V_{ro}^*)^\top B(z)R^{-1}B^\top(z)\nabla V_{ro}^* \\ 
		+\left(c_1\|z\|+c_2\|u_{ro}^*\|\right)\|\nabla V_{ro}^*\|+\frac{c_2^2 \|\nabla V_{ro}^*\|^2}{2\|u_{ro}^*\|^2}\left(u_{ro}^*\right)^\top R^{-1}u_{ro}^*.
	\end{aligned}
\end{equation}
It should be noted that the robust optimal controller \eqref{robust optimal controller} is given in an implicit form and represents the first-order necessary optimality condition. Cases where the optimal control or value function gradient vanishes typically occur at the isolated equilibrium points, hence we impose $r^*(0,0)=0$ to ensure that $r^*$ is continuous and well-defined. Furthermore, we make the following assumption similar with Assumption~\ref{Assum: nominal HJB}, which is a routinely adopted admissibility condition in nonlinear robust optimal and data-driven control formulations, see, e.g., \cite{aliyu2011nonlinear,JZP2022TNNLSContraction}.
\begin{assumption}\label{Assum: robust HJI}
	The robust optimal controller $u_{ro}^*(z)$ is admissible in the sense of robust control, i.e., it asymptotically stabilizes the lifted bilinear system \eqref{lifted system AB} and yields a finite performance index for admissible uncertainty $r(z,u)\in\mathcal{R}$.
\end{assumption}
The remainder of this section addresses two significant aspects. Firstly, we quantify to what extent robust controller design $u_{ro}^*$ via the above methodology \eqref{robust optimal controller}-\eqref{robust HJI} can correct the optimality deviations, where the analyses are dual to Propositions~\ref{Prop: performance deviation} and \ref{Prop: controller deviation}. Secondly, since it is nontrivial to obtain an analytical solution of \eqref{robust optimal controller}-\eqref{robust HJI}, we introduce a policy‑iteration algorithm, thereby bridging theoretical guarantees with computational tractability.

\subsection{Theoretical Analysis}
Intuitively speaking, adopting controller $u_{ro}^*$ sacrifices the nominal optimality to a certain degree under the ideal case, i.e., $r(z,u)=0$. However, it guarantees the performance improvement to the maximum degree under the worst-case uncertainties. Consequently, a trade-off between nominal performance and optimality robustness lies in the robust optimal control methodology.
\begin{theorem}\label{Thm: Robust Performance}
	Under the ideal error-free case, adopting the robust optimal controller $u_{ro}^*$ results in an extra cost
	\begin{equation}\label{extra cost uro}
		J(u_{ro}^*,z,0)-J(u_0^*,z,0)=\frac{1}{2}\int_{0}^{\infty}(u_{ro}^*-u_0^*)^\top R(u_{ro}^*-u_0^*) \mathrm{d}t
	\end{equation}
	where the integral term is evaluated along the nominal trajectory \eqref{nominal trajectory} controlled by $u_{ro}^*$. Conversely, under the worst-case approximation error given by \eqref{robust optimal controller}, adopting the nominal optimal controller $u_0^*$ results in an extra cost
	\begin{equation}\label{performance improvement uro}
		\begin{aligned}
			J(u_0^*&,z,r^*)-J(u_{ro}^*,z,r^*) \\
			\leq\int_{0}^{\infty}\big[&(1+\frac{1}{2\lambda_{\min}(R)})(u_{ro}^*-u_0^*)^\top R(u_{ro}^*-u_0^*)\\ +&(1+\frac{1}{\lambda_{\min}(R)})\frac{c_2^2}{2} \|\nabla V_{ro}^*\|^2\big]\mathrm{d}t
		\end{aligned}
	\end{equation}
	where the integral term is evaluated along the trajectory controlled by $u_0^*$ under the worst-case error $r^*$, i.e.,
	\begin{equation}\label{worst trajectory u0}
		\dot{z}(t)=A(z(t))+B(z(t))u_0^*(z(t))+r^*(z(t),u_0^*),\ z(0)=z.
	\end{equation}
\end{theorem}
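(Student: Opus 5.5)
The plan is to prove both parts by a completion-of-squares argument. In each part I differentiate a value function along the relevant closed-loop trajectory, substitute the corresponding Hamilton--Jacobi equation, and integrate from $0$ to $\infty$. Admissibility (Assumptions~\ref{Assum: nominal HJB} and \ref{Assum: robust HJI}) makes the terminal value vanish.

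\textbf{Identity \eqref{extra cost uro}.} Take $V_0^*$ along the nominal dynamics \eqref{nominal trajectory} driven by $u=u_{ro}^*$. Then $\dot V_0^*=(\nabla V_0^*)^\top(Az+B(z)u)$. From \eqref{nominal optimal controller} we have $(\nabla V_0^*)^\top B(z)R^{-1}B^\top(z)\nabla V_0^*=(u_0^*)^\top Ru_0^*$ and $(\nabla V_0^*)^\top B(z)u=-(u_0^*)^\top Ru$. The HJB equation \eqref{nominal HJB} gives $(\nabla V_0^*)^\top Az=-\tfrac12 z^\top Qz+\tfrac12 (u_0^*)^\top Ru_0^*$. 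Combining these yields
\[
\dot V_0^*=-\tfrac12 z^\top Qz-\tfrac12 u^\top Ru+\tfrac12(u-u_0^*)^\top R(u-u_0^*).
\]
Since $u_{ro}^*$ drives the trajectory to the origin, $V_0^*(z(t))\to 0$. Integrating therefore gives $J(u_{ro}^*,z,0)=V_0^*(z)+\tfrac12\int_0^\infty(u_{ro}^*-u_0^*)^\top R(u_{ro}^*-u_0^*)\,\mathrm{d}t$. Using $V_0^*(z)=J(u_0^*,z,0)$, this is exactly \eqref{extra cost uro}.

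\textbf{Bound \eqref{performance improvement uro}.} First, along the trajectory driven by $u_{ro}^*$ under $r^*$, the HJI equation \eqref{robust HJI} holds with equality. Hence $J(u_{ro}^*,z,r^*)=V_{ro}^*(z)$.

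Next I differentiate $V_{ro}^*$ along \eqref{worst trajectory u0}. Here $r^*$ is the feedback map in \eqref{robust optimal controller}, evaluated at $u=u_0^*$. Subtracting the zero quantity
\[
(\nabla V_{ro}^*)^\top\big(Az+B(z)u_{ro}^*+r^*(z,u_{ro}^*)\big)+\tfrac12 z^\top Qz+\tfrac12(u_{ro}^*)^\top Ru_{ro}^*
\]
gives
\[
\dot V_{ro}^*=-\tfrac12 z^\top Qz-\tfrac12(u_{ro}^*)^\top Ru_{ro}^*+(\nabla V_{ro}^*)^\top B(z)(u_0^*-u_{ro}^*)+c_2\big(\|u_0^*\|-\|u_{ro}^*\|\big)\|\nabla V_{ro}^*\|.
\]
Write $\delta=u_0^*-u_{ro}^*$. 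From the implicit formula \eqref{robust optimal controller},
\[
B^\top(z)\nabla V_{ro}^*=-Ru_{ro}^*-c_2\|\nabla V_{ro}^*\|\,\frac{u_{ro}^*}{\|u_{ro}^*\|}.
\]
Substituting this and completing the square,
\[
-\tfrac12(u_{ro}^*)^\top Ru_{ro}^*-(u_{ro}^*)^\top R\delta=-\tfrac12(u_0^*)^\top Ru_0^*+\tfrac12\delta^\top R\delta.
\]
Two cross terms remain. Both are bounded via Cauchy--Schwarz and the reverse triangle inequality, giving
\[
\Big|c_2\|\nabla V_{ro}^*\|\tfrac{(u_{ro}^*)^\top\delta}{\|u_{ro}^*\|}\Big|+c_2\big|\|u_0^*\|-\|u_{ro}^*\|\big|\|\nabla V_{ro}^*\|\le 2c_2\|\nabla V_{ro}^*\|\|\delta\|.
\]
By Young's inequality with a suitable weight, $2c_2\|\nabla V_{ro}^*\|\|\delta\|$ is dominated by a multiple of $\|\delta\|^2$ plus a multiple of $c_2^2\|\nabla V_{ro}^*\|^2$. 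Then $\|\delta\|^2\le\delta^\top R\delta/\lambda_{\min}(R)$ converts this into the form $(\tfrac12+\tfrac{1}{2\lambda_{\min}(R)})\delta^\top R\delta+(1+\tfrac{1}{\lambda_{\min}(R)})\tfrac{c_2^2}{2}\|\nabla V_{ro}^*\|^2$, or something no larger. Adding the $\tfrac12\delta^\top R\delta$ from the completed square gives the coefficient $1+\tfrac{1}{2\lambda_{\min}(R)}$ in \eqref{performance improvement uro}. Finally I integrate, use that $u_0^*$ stabilizes the system under $r^*\in\mathcal{R}$ so that $V_{ro}^*(z(\infty))=0$, and subtract $J(u_{ro}^*,z,r^*)=V_{ro}^*(z)$.

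\textbf{Expected obstacle.} The main difficulty is the handling of $r^*$ as a function of $u$. It is defined through $u_{ro}^*$ but must be evaluated at $u_0^*$, so it produces the nonsmooth term $\|u_0^*\|-\|u_{ro}^*\|$, and there is also the normalized direction $u_{ro}^*/\|u_{ro}^*\|$. I handle both with the reverse triangle inequality and by treating the equilibrium separately through the convention $r^*(0,0)=0$. I also expect some care in choosing the Young weights so that the stated constants come out. If exact matching fails, I would use the generous split $2ab\le a^2+b^2$ applied with $a=\|\delta\|$ and $b=c_2\|\nabla V_{ro}^*\|$, then bound $a^2\le\delta^\top R\delta/\lambda_{\min}(R)$. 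The resulting constants are dominated by the stated ones.
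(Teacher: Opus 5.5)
Your proof takes the paper's route: differentiate $V_0^*$ (resp.\ $V_{ro}^*$) along the relevant closed loop, substitute the HJB (resp.\ HJI), integrate using admissibility, and bound the leftover cross terms. Part one matches the paper line by line. In part two you complete the square directly in $u$. The paper instead first writes the integrand as $\tfrac12\nabla(V_{ro}^*-V_0^*)^\top BR^{-1}B^\top\nabla(V_{ro}^*-V_0^*)+\dots$ and then substitutes the controllers. Both arrive at the exact identity
\[
J(u_0^*,z,r^*)-J(u_{ro}^*,z,r^*)=\int_0^\infty\Big[\tfrac12\delta^\top R\delta-c_2\|p\|\,\hat{u}^\top\delta+c_2\|p\|\big(\|u_0^*\|-\|u_{ro}^*\|\big)\Big]\mathrm{d}t,
\]
where $p=\nabla V_{ro}^*$, $\hat{u}=u_{ro}^*/\|u_{ro}^*\|$ and $\delta=u_0^*-u_{ro}^*$. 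Your middle term, with no $R^{-1}$, is the correct one. The paper's displayed simplification has a stray $R^{-1}$ there, although its subsequent inequality agrees with your identity.

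The gap is the final constant matching, which you left open, and your fallback claim is false. Write $\lambda=\lambda_{\min}(R)$. The split $2ab\le a^2+b^2$ gives $(\tfrac12+\tfrac1\lambda)\delta^\top R\delta+c_2^2\|p\|^2$. The first coefficient exceeds $1+\tfrac{1}{2\lambda}$ when $\lambda<1$, and the second exceeds $\tfrac12(1+\tfrac1\lambda)$ when $\lambda>1$. So this bound is dominated by the stated one only when $\lambda=1$.

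Your main line does go through once you fix the weight:
\[
2c_2\|p\|\|\delta\|\le\sqrt{\lambda}\,\|\delta\|^2+\tfrac{1}{\sqrt{\lambda}}c_2^2\|p\|^2\le\tfrac{1}{\sqrt{\lambda}}\big(\delta^\top R\delta+c_2^2\|p\|^2\big).
\]
By AM--GM, $\tfrac{1}{\sqrt{\lambda}}\le\tfrac12(1+\tfrac1\lambda)$, so both resulting coefficients lie within the stated ones.

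The paper instead treats the two cross terms asymmetrically. It bounds $-c_2\|p\|\hat{u}^\top\delta=-(c_2\|p\|R^{-1/2}\hat{u})^\top(R^{1/2}\delta)\le\tfrac12\delta^\top R\delta+\tfrac{c_2^2}{2\lambda}\|p\|^2$. It bounds $c_2\|p\|(\|u_0^*\|-\|u_{ro}^*\|)\le\tfrac{c_2^2}{2}\|p\|^2+\tfrac{1}{2\lambda}\delta^\top R\delta$. Adding the completed-square $\tfrac12\delta^\top R\delta$ gives exactly $1+\tfrac{1}{2\lambda}$ and $\tfrac12(1+\tfrac1\lambda)$. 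Whichever you use, state the weight explicitly rather than leaving it to be chosen.
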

\begin{pf}
	Along the system trajectory \eqref{nominal trajectory} controlled by $u_{ro}^*$, the time derivative of $V_0^*=J(u_0^*,z,0)$ is given by
	\begin{equation*}
		\frac{\mathrm{d}V_0^*}{\mathrm{d}t}=(\nabla V_0^*)^\top\left(Az(t)+B(z(t))u_{ro}^*(z(t)) \right).
	\end{equation*}
	Since $V_0^*$ is solved with \eqref{nominal HJB}, we further obtain
	\begin{equation*}
		\frac{\mathrm{d}V_0^*}{\mathrm{d}t}=-\frac{1}{2}z^\top(t)Qz(t)+\frac{1}{2}(u_0^*)^\top R u_0^*-(u_0^*)^\top Ru_{ro}^*.
	\end{equation*}
	With Assumption~\ref{Assum: robust HJI}, the extra cost satisfies
	\begin{equation*}
		\begin{aligned}
			&J(u_{ro}^*,z,0)-J(u_0^*,z,0) \\
			=&\int_{0}^{\infty} \frac{1}{2}z^\top(t)Qz(t) +\frac{1}{2}(u_{ro}^*)^\top Ru_{ro}^*+\frac{\mathrm{d}V_0^*}{\mathrm{d}t} \mathrm{d}t \\
			=&\frac{1}{2}\int_{0}^{\infty} (u_{ro}^*-u_0^*)^\top R(u_{ro}^*-u_0^*)\mathrm{d}t.
		\end{aligned}
	\end{equation*}
	Along \eqref{worst trajectory u0}, the time derivative of $V_{ro}^*=J(u_{ro}^*,z,r^*)$ is
	\begin{equation*}
		\frac{\mathrm{d}V_{ro}^*}{\mathrm{d}t}=(\nabla V_{ro}^*)^\top \left(Az(t)+B(z(t))u_0^*+r^*(z(t),u_0^*)\right).
	\end{equation*}
	Since $V_{ro}^*$ is solved with \eqref{robust HJI}, we further calculate and simplify the time derivative, obtaining
	\begin{equation*}
		\begin{aligned}
			\frac{\mathrm{d}V_{ro}^*}{\mathrm{d}t}=-\|\nabla V_{ro}^*\|\left(c_1\|z\|+c_2\|u_{ro}^*\|\right)+(\nabla V_{ro}^*)^\top r^*(z(t),u_0^*) \\ +(\nabla V_{ro}^*)^\top B(z(t))u_0^*-\frac{c_2^2 \|\nabla V_{ro}^*\|^2}{2\|u_{ro}^*\|^2}\left(u_{ro}^*\right)^\top R^{-1}u_{ro}^* \\
			+\frac{1}{2}(\nabla V_{ro}^*)^\top B(z(t))R^{-1}B^\top(z(t))\nabla V_{ro}^* -\frac{1}{2}z^\top(t)Qz(t)
		\end{aligned}
	\end{equation*}
	Since Assumption~\ref{Assum: robust HJI} admits that $V_{ro}^*(z(\infty))=0$, the extra cost satisfies
	\begin{equation*}
		\begin{aligned}
			&J(u_0^*,z,r^*)-J(u_{ro}^*,z,r^*) \\
			=&\int_{0}^{\infty} \frac{1}{2}z^\top(t)Qz(t) +\frac{1}{2}(u_0^*)^\top Ru_0^*+\frac{\mathrm{d}V_{ro}^*}{\mathrm{d}t} \mathrm{d}t \\
			=&\int_{0}^{\infty} \big[\frac{1}{2}\nabla(V_{ro}^*-V_0^*)^\top B(z(t))R^{-1}B^\top(z(t)) \nabla (V_{ro}^*-V_0^*) \\
			+&c_2\|\nabla V_{ro}^*\|(\|u_0^*\|-\|u_{ro}^*\|)-\frac{c_2^2 \|\nabla V_{ro}^*\|^2}{2\|u_{ro}^*\|^2}\left(u_{ro}^*\right)^\top R^{-1}u_{ro}^*\big]\mathrm{d}t.
		\end{aligned}
	\end{equation*}
	With the form of $u_0^*$ and $u_{ro}^*$ in \eqref{nominal optimal controller} and \eqref{robust optimal controller}, the extra cost is simplified and bounded with Cauchy-Schwartz inequality by
	\begin{equation}\label{robust integrand}
		\begin{aligned}
			&J(u_0^*,z,r^*)-J(u_{ro}^*,z,r^*) \\
			=&\int_{0}^{\infty}\big[\frac{c_2\|\nabla V_{ro}^*\|}{\|u_{ro}^*\|}(u_{ro}^*)^\top R^{-1}(u_{ro}^*-u_0^*) \\
			+&\frac{1}{2}(u_0^*-u_{ro}^*)^\top R(u_0^*-u_{ro}^*)+c_2\|\nabla V_{ro}^*\|(\|u_0^*\|-\|u_{ro}^*\|) \big] \mathrm{d}t \\
			\leq& \int_{0}^{\infty}\big[\frac{1}{2}c_2^2 \|\nabla V_{ro}^*\|^2\cdot \frac{\left(u_{ro}^*\right)^\top R^{-1}u_{ro}^*}{\|u_{ro}^*\|^2} \\
			+&(u_0^*-u_{ro}^*)^\top R(u_0^*-u_{ro}^*) +c_2\|\nabla V_{ro}^*\|\|u_0^*-u_{ro}^*\|\big]\mathrm{d}t.
		\end{aligned}
	\end{equation}
	Using Cauchy-Schwartz inequality again, the last term of the integrand in \eqref{robust integrand} is bounded with
	\begin{equation}\label{robust integrand 3}
		\begin{aligned}
			c_2\|\nabla V_{ro}^*\|\|u_0^*-u_{ro}^*\|\leq \frac{1}{2}c_2^2\|\nabla V_{ro}^*\|^2 +\frac{1}{2}\|u_0^*-u_{ro}^*\|^2 \\
			\leq \frac{1}{2}c_2^2\|\nabla V_{ro}^*\|^2+\frac{1}{2\lambda_{\min}(R)} (u_0^*-u_{ro}^*)^\top R(u_0^*-u_{ro}^*).
		\end{aligned}
	\end{equation}
	Meanwhile, the first term of the integrand in \eqref{robust integrand} can be bounded using the smallest eigenvalue of $R$, since
	\begin{equation}\label{robust integrand 1}
		\frac{\left(u_{ro}^*\right)^\top R^{-1}u_{ro}^*}{\|u_{ro}^*\|^2} \leq \lambda_{\max}(R^{-1})=\frac{1}{\lambda_{\min}(R)}.
	\end{equation}
	Combining \eqref{robust integrand}, \eqref{robust integrand 3} and \eqref{robust integrand 1} completes the proof.
\end{pf}
\begin{remark}[Balancing performance and robustness]\label{Rem: Tradeoff}
	As stated above, Theorem~\ref{Thm: Robust Performance} reveals a clear \textit{performance-robustness trade-off} inherent in the proposed methodology. Specifically, the cost of robustness under the error-free case, given by \eqref{extra cost uro}, is solely attributed to the integrated deviation between the robust optimal controller $u_{ro}^*$ and nominal one $u_0^*$. By contrast, when the approximation error is present, the performance degradation of $u_0^*$ given by \eqref{performance improvement uro}, as well as the performance improvement achieved by $u_{ro}^*$ under the worst-case error, exhibits a fundamentally different structure. Particularly, it not only amplifies the contribution of controller deviation, but also includes an additional positive term proportional to $c_2^2\|\nabla V_{ro}^*\|^2$. Although the integrals follow different trajectories, this structural gap reveals that the robust controller design deliberately accepts a quantifiable and often small nominal performance loss to secure a guaranteed and potentially large improvement of optimality robustness in the presence of approximation error.
\end{remark}

Meanwhile, the comparison from another perspective of controller is also necessary, i.e., between $u_{ro}^*$ and the actual optimal controller $u^*$.
\begin{theorem}\label{Thm: Robust Controller}
	Suppose that the actual optimal controller $u^*$ stabilizes the original nonlinear system \eqref{original system}. Then, the deviation between $u^*$ and the robust optimal controller $u_{ro}^*$ is characterized by
	\begin{equation}\label{robust controller deviation}
		\begin{aligned}
			&\int_{0}^{\infty}(u_{ro}^*-u^*)^\top R(u_{ro}^*-u^*)\mathrm{d}t \\
			\leq& \max\left\{\frac{4c_1L_p}{\sqrt{\lambda_{\min}(\overline{Q})}}, \frac{4c_2}{\sqrt{\lambda_{\min}(R)}}\right\}\left(V^*\int_{0}^\infty\|\nabla V_{ro}^*\|^2\mathrm{d}t\right)^{\frac{1}{2}}
		\end{aligned}
	\end{equation}
	where the integral term is evaluated along the actual optimal trajectory, i.e., \eqref{original system} controlled by $u^*$.
\end{theorem}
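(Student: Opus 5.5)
We would prove Theorem~\ref{Thm: Robust Controller} by the same completion-of-squares device used in the proof of Theorem~\ref{Thm: Robust Performance}, now with $V_{ro}^*$ as the storage function along the actual optimal trajectory. Write the actual dynamics as $\dot z=Az+B(z)u^*+r(z,u^*)$, where $r$ is the true error, which lies in $\mathcal{R}$, and let $V^*=J(u^*,z,r)$ denote the actual optimal cost. Along this trajectory
\[
\frac{\mathrm{d}V_{ro}^*}{\mathrm{d}t}=(\nabla V_{ro}^*)^\top\bigl(Az+B(z)u^*+r(z,u^*)\bigr).
\]
We substitute the HJI \eqref{robust HJI} for $(\nabla V_{ro}^*)^\top Az$. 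The terms in $B$ are rewritten through the first line of \eqref{robust optimal controller}: set $v:=u_{ro}^*+\frac{c_2\|\nabla V_{ro}^*\|}{\|u_{ro}^*\|}R^{-1}u_{ro}^*=-R^{-1}B^\top\nabla V_{ro}^*$. With this, $(\nabla V_{ro}^*)^\top B u^*=-v^\top Ru^*$, and the quadratic term equals $\tfrac12 v^\top Rv$.

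Adding $\tfrac12 z^\top Qz+\tfrac12 (u^*)^\top Ru^*$ and expanding, the cross terms in $\frac{c_2\|\nabla V_{ro}^*\|}{\|u_{ro}^*\|}R^{-1}u_{ro}^*$ coming from $\tfrac12 v^\top R v$ cancel against the last HJI term, up to a linear piece. The aim is the identity
\[
\tfrac12 z^\top Qz+\tfrac12 (u^*)^\top Ru^*+\dot V_{ro}^*=\tfrac12 (u^*-u_{ro}^*)^\top R(u^*-u_{ro}^*)+E,
\]
with
\[
E=(\nabla V_{ro}^*)^\top r(z,u^*)-(c_1\|z\|+c_2\|u_{ro}^*\|)\|\nabla V_{ro}^*\|+\frac{c_2\|\nabla V_{ro}^*\|}{\|u_{ro}^*\|}(u_{ro}^*)^\top(u^*-u_{ro}^*).
\]
The last term of $E$ is at most $c_2\|\nabla V_{ro}^*\|(\|u^*\|-\|u_{ro}^*\|)$ by Cauchy--Schwarz. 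Combined with $\|r\|\le c_1\|z\|+c_2\|u^*\|$, this gives
\[
E\ge -2\bigl(c_1\|z\|+c_2\|u^*\|\bigr)\|\nabla V_{ro}^*\|.
\]
Sign control is the step we expect to be the main obstacle. The naive estimate only gives $|E|\le 2(c_1\|z\|+c_2\|u^*\|)\|\nabla V_{ro}^*\|$, and we must check that we get the correct direction of inequality to bound the controller deviation from above.

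Next, integrate over $[0,\infty)$. Both $u^*$ and $u_{ro}^*$ are stabilizing, by assumption and Assumption~\ref{Assum: robust HJI}, so $V_{ro}^*(z(\infty))=0$. This yields
\[
V^*-V_{ro}^*(z)=\tfrac12\int_0^\infty (u^*-u_{ro}^*)^\top R(u^*-u_{ro}^*)\,\mathrm{d}t+\int_0^\infty E\,\mathrm{d}t.
\]
We then need $V^*\le V_{ro}^*$. This holds because $V_{ro}^*$ is the min--max value over $\mathcal{R}$ while the true $r\in\mathcal{R}$, so $V^*=\min_u J(u,z,r)\le \min_u\max_{\mathcal{R}} J=V_{ro}^*$. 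Hence the left side is $\le 0$, and
\[
\tfrac12\int_0^\infty (u^*-u_{ro}^*)^\top R(u^*-u_{ro}^*)\,\mathrm{d}t\le -\int_0^\infty E\,\mathrm{d}t\le 2\int_0^\infty (c_1\|z\|+c_2\|u^*\|)\|\nabla V_{ro}^*\|\,\mathrm{d}t.
\]

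Finally, following the treatment behind $C_{12}$ in Proposition~\ref{Prop: performance deviation}, we bound the error magnitude by the running cost. We use $\|z\|=\|\Psi(x)\|\le L_p\|x\|\le L_p\sqrt{x^\top\overline Q x/\lambda_{\min}(\overline Q)}$ and $\|u^*\|\le\sqrt{(u^*)^\top Ru^*/\lambda_{\min}(R)}$. Then
\[
c_1\|z\|+c_2\|u^*\|\le \tfrac12\max\left\{\frac{2c_1L_p}{\sqrt{\lambda_{\min}(\overline Q)}},\frac{2c_2}{\sqrt{\lambda_{\min}(R)}}\right\}\sqrt{2\,\ell(x,u^*)},
\]
where $\ell(x,u)=\tfrac12(x^\top\overline Qx+u^\top Ru)$ is the running cost. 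Applying Cauchy--Schwarz in time, $\int\sqrt{2\ell}\,\|\nabla V_{ro}^*\|\,\mathrm{d}t\le \bigl(2V^*\int\|\nabla V_{ro}^*\|^2\,\mathrm{d}t\bigr)^{1/2}$, since $\int\ell\,\mathrm{d}t=V^*$. Multiplying through by 2 gives a constant of the form $\max\{4c_1L_p/\sqrt{\lambda_{\min}(\overline Q)},4c_2/\sqrt{\lambda_{\min}(R)}\}$ times $(V^*\int\|\nabla V_{ro}^*\|^2\,\mathrm{d}t)^{1/2}$. The numerical factors still need care: in particular, tightening $-E$ to one factor of $(c_1\|z\|+c_2\|u^*\|)\|\nabla V_{ro}^*\|$ instead of two, by pairing the $-c_2\|u_{ro}^*\|$ term against the Cauchy--Schwarz term, is what should make the constants match \eqref{robust controller deviation} exactly.
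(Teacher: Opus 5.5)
Your route is the paper's route: $V_{ro}^*$ as a storage function along the $u^*$-trajectory, substitution of the robust HJI and of $B^\top(z)\nabla V_{ro}^*=-Rv$, integration using $V_{ro}^*(z(\infty))=0$ (only the hypothesis that the $u^*$-trajectory converges is needed for this), the comparison $V^*\le V_{ro}^*$, and Cauchy--Schwarz in time. Two computations are wrong, however.

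First, the last term of $E$ has the wrong sign. The expression $\tfrac12 v^\top Rv-v^\top Ru^*+\tfrac12 (u^*)^\top Ru^*$, minus the last HJI term, equals $\tfrac12(u^*-u_{ro}^*)^\top R(u^*-u_{ro}^*)-\frac{c_2\|\nabla V_{ro}^*\|}{\|u_{ro}^*\|}(u_{ro}^*)^\top(u^*-u_{ro}^*)$. The $+c_2\|\nabla V_{ro}^*\|\|u_{ro}^*\|$ piece of the second term cancels the HJI term $-c_2\|u_{ro}^*\|\|\nabla V_{ro}^*\|$ exactly, leaving
\[
E=(\nabla V_{ro}^*)^\top r(z,u^*)-c_1\|z\|\|\nabla V_{ro}^*\|-c_2\|\nabla V_{ro}^*\|\frac{(u_{ro}^*)^\top u^*}{\|u_{ro}^*\|},
\]
which is the paper's integrand. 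With your sign, the Cauchy--Schwarz estimate you quote is an \emph{upper} bound on that term, so it cannot produce a lower bound on $E$. With the correct sign there is no sign-control obstacle at all, because $|E|\le 2(c_1\|z\|+c_2\|u^*\|)\|\nabla V_{ro}^*\|$ is immediate.

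Second, the claim $c_1\|z\|+c_2\|u^*\|\le\tfrac12 C_{12}\sqrt{2\ell}$ is false, since it amounts to $\sqrt a+\sqrt b\le\sqrt{a+b}$. The correct estimate is $c_1\|z\|+c_2\|u^*\|\le C_{12}\sqrt{\ell}$, which is the paper's $\int_0^\infty(c_1\|z\|+c_2\|u^*\|)^2\,\mathrm{d}t\le C_{12}^2V^*$.

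With both repairs, write $I$ for the left-hand side of the claimed inequality. Your (correct) bookkeeping $\tfrac12 I\le-\int_0^\infty E\,\mathrm{d}t$ then gives $I\le 4C_{12}\bigl(V^*\int_0^\infty\|\nabla V_{ro}^*\|^2\mathrm{d}t\bigr)^{1/2}$. This is twice the stated constant $2C_{12}=\max\{4c_1L_p/\sqrt{\lambda_{\min}(\overline Q)},4c_2/\sqrt{\lambda_{\min}(R)}\}$.

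The tightening you propose is not available. After the cancellation above there is no $c_2\|u_{ro}^*\|$ term left to pair. Moreover, the pointwise bound $-E\le 2(c_1\|z\|+c_2\|u^*\|)\|\nabla V_{ro}^*\|$ is attained for the admissible error $r=-(c_1\|z\|+c_2\|u\|)\nabla V_{ro}^*/\|\nabla V_{ro}^*\|$ with $u^*$ pointing in the same direction as $u_{ro}^*$. One factor comes from the true error, the other from the worst-case term built into the HJI.

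The paper arrives at $2C_{12}$ only because, from $V_{ro}^*-V^*=-\tfrac12 I-\int_0^\infty E\,\mathrm{d}t\ge0$, it writes $I\le-\int_0^\infty E\,\mathrm{d}t$ instead of $I\le-2\int_0^\infty E\,\mathrm{d}t$, silently dropping the factor $\tfrac12$. So the argument you share with the paper proves the bound with $\max\{8c_1L_p/\sqrt{\lambda_{\min}(\overline Q)},8c_2/\sqrt{\lambda_{\min}(R)}\}$. The constant as stated does not follow from this line of reasoning, and your proposal, as written, does not close that gap.
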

\begin{pf}
	Along the trajectory \eqref{original system} (equivalently \eqref{lifted system AB}) controlled by $u^*$, the time derivative of $V_{ro}^*$ is given by
	\begin{equation*}
		\frac{\mathrm{d}V_{ro}^*}{\mathrm{d}t}=(\nabla V_{ro}^*)^\top \left(Az(t)+B(z(t))u^*+r(z(t),u^*) \right).
	\end{equation*}
	The time derivative of $V^*$ satisfies a similar form. Since the value function $V_{ro}^*$ is solved with \eqref{robust HJI}, we obtain
	\begin{equation*}
		\begin{aligned}
			\frac{\mathrm{d}}{\mathrm{d}t}&(V_{ro}^*-V^*)
			=\frac{1}{2}(\nabla V_{ro}^*)^\top B(z(t))R^{-1}B^\top(z(t))\nabla V_{ro}^* \\
			+&(\nabla V_{ro}^*)^\top B(z(t))u^*+\frac{1}{2}(u^*)^\top Ru^*+(\nabla V_{ro}^*)^\top r(z,u^*) \\
			-&(c_1\|z\|+c_2\|u_{ro}^*\|)\|V_{ro}^*\|-\frac{c_2^2 \|\nabla V_{ro}^*\|^2}{2\|u_{ro}^*\|^2}\left(u_{ro}^*\right)^\top R^{-1}u_{ro}^*.
		\end{aligned}
	\end{equation*}
	With \eqref{robust optimal controller}, we have the following relation
	\begin{equation*}
		B^\top (z)\nabla V_{ro}^*=-R\left(u_{ro}^* +\frac{c_2\|\nabla V_{ro}^*\|}{\|u_{ro}^*\|}R^{-1}u_{ro}^*\right),
	\end{equation*}
	then the time derivative of $V_{ro}^*-V^*$ is written as
	\begin{equation*}
		\begin{aligned}
			\frac{\mathrm{d}}{\mathrm{d}t}&(V_{ro}^*-V^*)
			=(\nabla V_{ro}^*)^\top [r(z,u^*)-\frac{c_1\|z\|+c_2\|u_{ro}^*\|}{\|\nabla V_{ro}^*\|}\nabla V_{ro}^*] \\
			&+\frac{1}{2}(u^*-u_{ro}^*)^\top R(u^*-u_{ro}^*) -c_2\|V_{ro}^*\|\frac{(u_{ro}^*)^\top(u^*-u_{ro}^*)}{\|u_{ro}^*\|} \\
			&=(\nabla V_{ro}^*)^\top \left[r(z,u^*)-\frac{c_1\|z\|+c_2\frac{(u_{ro}^*)^\top u^*}{\|u_{ro}^*\|}}{\|\nabla V_{ro}^*\|}\nabla V_{ro}^*\right] \\
			&+\frac{1}{2}(u^*-u_{ro}^*)^\top R(u^*-u_{ro}^*).
		\end{aligned}
	\end{equation*}
	Integrating the above equation along the actual system trajectory \eqref{lifted system AB} controlled by $u^*$, we obtain
	\begin{equation}\label{robust controller deviation derivation}
		\begin{aligned}
			&V_{ro}^*-V^*=-\frac{1}{2}\int_{0}^\infty(u^*-u_{ro}^*)^\top R(u^*-u_{ro}^*)\mathrm{d}t \\ -&\int_{0}^\infty(\nabla V_{ro}^*)^\top[r(z,u^*) -\frac{c_1\|z\|+c_2\frac{(u_{ro}^*)^\top u^*}{\|u_{ro}^*\|}}{\|\nabla V_{ro}^*\|}\nabla V_{ro}^*] \mathrm{d}t
		\end{aligned}
	\end{equation}
	since we assume that the optimal controller $u^*$ stabilizes the system then $V^*(z(\infty))=V_{ro}^*(z(\infty))=0$.
	
	Recall that $u_{ro}^*$ achieves optimality under the worst-case approximation error $r^*$, in other words,
	\begin{equation}
		V_{ro}^*(z)=J(u_{ro}^*,z,r^*)\geq J(u_{ro}^*,z,r)\geq J(u^*,z,r)=V^*(z).
	\end{equation}
	Hence with \eqref{robust controller deviation derivation}, the controller deviation is bounded by
	\begin{equation*}
		\begin{aligned}
			&\int_{0}^\infty(u^*-u_{ro}^*)^\top R(u^*-u_{ro}^*)\mathrm{d}t \\
			\leq &\int_{0}^\infty(\nabla V_{ro}^*)^\top\left[ \frac{c_1\|z\|+c_2\frac{(u_{ro}^*)^\top u^*}{\|u_{ro}^*\|}}{\|\nabla V_{ro}^*\|}\nabla V_{ro}^*-r(z,u^*)\right] \mathrm{d}t \\
			\leq &\int_{0}^\infty \|\nabla V_{ro}^*\|\cdot 2\left(c_1\|z(t)\|+c_2\|u^*(z(t))\| \right) \mathrm{d}t \\
			\leq &2\left(\int_{0}^\infty \|\nabla V_{ro}^*\|^2 \mathrm{d}t\right)^\frac{1}{2} \left(\int_{0}^\infty (c_1\|z(t)\|+c_2\|u^*\|)^2 \mathrm{d}t\right)^\frac{1}{2}.
		\end{aligned}
	\end{equation*}
	The last integral term can be 
	\begin{equation*}
		\begin{aligned}
			\int_{0}^\infty (c_1\|z\|+c_2\|u^*\|)^2 \mathrm{d}t \leq\int_{0}^\infty2\left(c_1^2\|z\|^2+c_2^2\|u^*\|^2\right)\mathrm{d}t \\
			\leq\int_{0}^\infty2\left(c_1^2L_p^2\|x\|^2+c_2^2\|u_0^*\|^2\right)\mathrm{d}t \leq C_{12}^2V^*
		\end{aligned}
	\end{equation*}
	where $C_{12}$ is given in Proposition~\ref{Prop: performance deviation}, and the upper bound in \eqref{robust controller deviation} holds.
\end{pf}
\begin{remark}[Comparing the controller deviation]
	According to the proof of \cite[Theorem~4.4]{lin2025optimalitydeviationusingkoopman}, the optimality deviation for $u_0^*$ in \eqref{nominal optimal controller} is upper-bounded with
	\begin{equation}\label{controller deviation comparsion}
		\begin{aligned}
			&\int_{0}^{\infty}(u_0^*-u^*)^\top R(u_0^*-u^*)\mathrm{d}t\leq 2\Delta V_{\max}\\
			+&\max\left\{\frac{4c_1L_p}{\sqrt{\lambda_{\min}(\overline{Q}})}, \frac{4c_2}{\sqrt{\lambda_{\min}(R)}}\right\}\left(V^*\int_{0}^\infty\|\nabla V_0^*\|^2\mathrm{d}t\right)^{\frac{1}{2}}.
		\end{aligned}
	\end{equation}
	In contrast, only the last term of \eqref{controller deviation comparsion} appears in \eqref{robust controller deviation}, the upper bound for optimality deviation of $u_{ro}^*$ (definitely $\nabla V_0^*$ is replaced by $\nabla V_{ro}^*$). Although the two bounds cannot be strictly ordered without further assumptions, this structural comparison intuitively highlights that the robust optimal controller is designed to actively compensate for the approximation error within its feedback law, thereby reducing the potentially worst-case optimality deviation.
\end{remark}

\subsection{Practical Implementation}
We have demonstrated that the robust optimal control methodology by solving the min-max problem \eqref{robust optimal control problem} efficiently achieves a performance improvement under the worst-case approximation error. However, solving the coupled equations \eqref{robust HJI} and \eqref{robust optimal controller} directly is intractable due to the nonlinear interdependency between $u_{ro}^*$ and $V_{ro}^*$. To bridge this critical gap between theory and practice, we propose a policy iteration algorithm that alternates between evaluating the cost of a given policy and improving it, inspired by basic principles of reinforcement learning and adaptive dynamic programming \cite{CSM2012RL}.

Due to the first-order and nonlinear nature of the HJB equation, classical solutions may fail to exist in general, which poses substantial challenges for the convergence analysis as well as practical computation for iterative algorithms. To overcome this difficulty, we adopt a vanishing viscosity regularization by introducing a small diffusion term. Specifically, in the policy evaluation step during the $k$-th iteration, we solve the following equation
\begin{equation}\label{policy evaluation}
	\begin{aligned}
		\left(\nabla V_{\varepsilon}^{(k+1)}\right)^\top Az-\frac{1}{2}\left(\nabla V_{\varepsilon}^{(k+1)}\right)^\top B(z)R^{-1}B^\top(z)\nabla V_{\varepsilon}^{(k+1)} \\ 
		+\frac{1}{2}z^\top Qz-\varepsilon \nabla^2 V_{\varepsilon}^{(k+1)} =-\left(c_1\|z\|+c_2\|u_{\varepsilon}^{(k)}\|\right)\|\nabla V_{\varepsilon}^{(k)}\| \\ -\frac{c_2^2 \|\nabla V_{\varepsilon}^{(k)}\|^2}{2\|u_{\varepsilon}^{(k)}\|^2} \left(u_{\varepsilon}^{(k)}\right)^\top R^{-1}u_{\varepsilon}^{(k)}
	\end{aligned}
\end{equation}
where diffusion parameter $\varepsilon>0$ is sufficiently small and $\nabla^2$ denotes the Laplace operator $\nabla^2 V_{ro}=\nabla\cdot(\nabla V_{ro})$. This technique has been explored for analyzing HJB equations to ensure sufficient smoothness while preserving the essential structure of the original problem \cite{bardi1997optimal}. In the policy update step, the controller is updated with the newly obtained value function $V_{ro}^{(k+1)}$ by
\begin{equation}\label{policy improvement}
	u_{\varepsilon}^{(k+1)}(z)=-R^{-1}B^\top(z)\nabla V_{\varepsilon}^{(k+1)}-\frac{c_2\|\nabla V_{\varepsilon}^{(k+1)}\|}{\|u_{\varepsilon}^{(k)}\|}R^{-1}u_{\varepsilon}^{(k)}.
\end{equation}
It is worth noting that the vanishing viscosity term $\varepsilon \nabla^2 V$ is introduced mainly for analytical regularization. As $\varepsilon \to 0$, the corresponding regularized solutions converge to the viscosity solution of HJI equation \eqref{robust HJI}, thereby preserving the structure of underlying robust optimal control methodology \eqref{robust optimal controller} and \eqref{robust HJI}. With $\varepsilon>0$, the policy evaluation equation \eqref{policy evaluation} at each iteration step becomes a quasilinear elliptic PDE, which can be solved using well-established numerical methods \cite{doi:10.1137/110825960}.

The algorithm maintains full compatibility with classical optimal control settings. In the ideal error-free case $r(z,u)=0$ where $c_1=c_2=0$, setting $\varepsilon=0$ recovers the classical HJB equation associated with the nominal optimal control problem, which can be efficiently solved by existing numerical methods \cite{TNNLS2015RL,Chen2018IterativeHJB}. The overall structure of proposed scheme is summarized in Algorithm~1.
\begin{algorithm}[H]\label{Algorithm 1}
	\caption{Policy Iteration Implementation for Koopman-Based Robust Optimal Control Methodology}\label{alg:alg1}
	\begin{algorithmic}
		\State \textbf{Initialization}
		\State \hspace{0.5cm}{Select sufficiently small parameters} $ \varepsilon > 0, \nu > 0 $.
		\State \hspace{0.5cm}{Set $k=0$, select an initial admissible policy $ u_{\varepsilon}^{(0)} $ and corresponding value function $ V_{\varepsilon}^{(0)} $, iterate on $k$.} 
		\State \textbf{Policy Evaluation}
		\State \hspace{0.5cm}{Obtain the value function $V_{\varepsilon}^{(k+1)}$ (or gradient $\nabla V_{\varepsilon}^{(k+1)}$) by solving \eqref{policy evaluation}.}
		\State \textbf{Policy Improvement}
		\State \hspace{0.5cm}{Obtain the updated policy $u_{\varepsilon}^{(k+1)}$ with \eqref{policy improvement}.}
		\State \textbf{Convergence Check}
		\State \hspace{0.5cm}{If $\|u_{\varepsilon}^{(k+1)}-u_{\varepsilon}^{(k)}\|<\nu$, stop iteration and return the policy $u_{\varepsilon}^{(k+1)}$. Otherwise, set $k=k+1$ and continue the policy evaluation step.}
	\end{algorithmic}
	\label{alg1}
\end{algorithm}
\begin{remark}[Underlying design philosophy]
	To address the strong coupling between $V_{ro}^*$ and $u_{ro}^*$ in \eqref{robust HJI} and \eqref{robust optimal controller}, Algorithm~1 is built upon a key insight that, the impact of uncertainty $(c_1,c_2)$ is temporarily fixed with calculation results from the previous iteration. From a computational perspective, this renders the right-hand side of \eqref{policy evaluation} known, effectively reducing \eqref{robust HJI} to a standard HJB form (with regularization) that can be efficiently solved using existing numerical tools. From a theoretical viewpoint, since a well-identified bilinear lifting \eqref{lifted system AB} typically yields small error coefficients $(c_1,c_2)$, the associated terms act as mild perturbations. Consequently, fixing them with $V_\varepsilon^{(k)}$ and $u_\varepsilon^{(k)}$ does not significantly alter the solution at each iteration. This deliberate yet justifiable approximation transforms a challenging problem into a sequence of tractable subproblems, which is later shown convergent to the solution of \eqref{robust HJI}.
\end{remark}

The robust optimal control formulation involves normalization terms depending on $\|\nabla V\|$ and $\|u\|$, which may become ill-defined near the equilibrium point. To avoid potential singularities and ensure the well-posedness of proposed algorithm, we restrict our analysis to a punctured domain excluding a small neighborhood of the origin. Specifically, consider a bounded open set $\Omega\in\mathbb{R}^N$ containing the origin, and define a punctured domain $\Omega_\rho=\Omega\backslash B_\rho(0)$ where $\rho>0$ is a small constant and $B_\rho(0)$ denotes a ball of radius $\rho$. Further, we assume all of the considered control inputs $u$ belong to $\mathbb{R}^m \backslash B_{\rho'}(0)$. Since the origin is an equilibrium point of closed-loop system and thus the value function is normalized as $V(0)=0$, additional homogeneous Dirichlet boundary conditions are imposed to ensure the closed-loop stability, i.e., $V=0, \nabla V=0$ on $\partial B_\rho$.
\begin{remark}[Inside small neighborhood of origin]
	Since the proposed algorithm might become ill-defined near the equilibrium, we can implement the algorithm outside $B_\rho(0)$ with a sufficiently small $\rho>0$. As for $z\in B_\rho(0)$, one can find a state-feedback design $u=Kz$ linearly dependent on the lifted state with a primary objective to ensure the closed-loop robust stability \cite{Strasser2024Koopman,lin2025integratinguncertainties}. This thought has been also confirmed effective by existing works on Koopman-based optimal control \cite{HuangUmesh2022ConvexPFKO}.
\end{remark}
Now we proceed to establish the well-posedness and convergence of proposed algorithm. Define
\begin{equation}\label{definition of H delta}
	\begin{aligned}
		H(z,p)=p^\top Az+\frac{1}{2}z^\top Qz-\frac{1}{2}p^\top B(z)R^{-1}B^\top(z)p, \\
		\delta(z,u,p)=-(c_1\|z\|+c_2\|u\|)\|p\|-\frac{c_2^2\|p\|^2}{2\|u\|^2}u^\top R^{-1}u.
	\end{aligned}
\end{equation}
Then the policy evaluation step \eqref{policy evaluation} is written as
\begin{equation}\label{rewritten policy evaluation}
	-\varepsilon\nabla^2 V_{\varepsilon}^{(k+1)}+H(z,\nabla V_{\varepsilon}^{(k+1)}) =\delta(z,u_{ro}^{(k)},\nabla V_{\varepsilon}^{(k)})
\end{equation}
which is an elliptic second-order PDE. Meanwhile, equation \eqref{robust HJI} is equivalently written as
\begin{equation}\label{rewritten robust HJI}
	H(z,\nabla V_{ro}^*)=\delta(z,u_{ro}^*,\nabla V_{ro}^*).
\end{equation}
To connect \eqref{rewritten policy evaluation} and \eqref{rewritten robust HJI}, a natural intermediate is
\begin{equation}\label{rewritten intermediate}
	-\varepsilon\nabla^2 V_{\varepsilon}^*+H(z,\nabla V_{\varepsilon}^*)=\delta(z,u_{\varepsilon}^*, \nabla V_{\varepsilon}^*)
\end{equation}
where $u_{\varepsilon}^*$ is obtained with \eqref{robust optimal controller} substituting $V_{ro}^*$ with $V_{\varepsilon}^*$. We will first prove that the policy iteration algorithm promises the convergence $V_{\varepsilon}^{(k)}\rightarrow V_{\varepsilon}^*, u_{\varepsilon}^{(k)}\rightarrow u_{\varepsilon}^*$ for any fixed $\varepsilon>0$, and $V_{\varepsilon}^*\rightarrow V_{ro}^*, u_{\varepsilon}^*\rightarrow u_{ro}^*$ when $\varepsilon \rightarrow0$.

\begin{assumption}\label{Assum: initial bounded}
	The initial admissible policy and the gradient of corresponding value function are essentially bounded, i.e., $u_{\varepsilon}^{(0)}\in L^\infty(\Omega_\rho), \nabla V_{\varepsilon}^{(0)}\in L^\infty(\Omega_\rho)$.
\end{assumption}
\begin{lemma}\label{Lem: Properties of Vkuk}
	Let Assumption~\ref{Assum: initial bounded} holds. For any fixed $\varepsilon>0$, there exists a unique classical solution $V_{\varepsilon}^{(k+1)}\in \mathcal{C}^{2,\alpha} (\Omega_\rho)$ for \eqref{rewritten policy evaluation} with homogeneous boundary conditions at each iteration. Moreover, there exist $C_{p,1},C_{u,1}>0$ independent of $k$, such that for all iterations $\|\nabla V_{\varepsilon}^{(k)}\|_{ L^\infty(\Omega_\rho)}\leq C_{p,1}$ and $\|u_{\varepsilon}^{(k)}\|_{L^\infty(\Omega_\rho)}\leq C_{u,1}$ uniformly.
\end{lemma}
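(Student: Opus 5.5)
The argument goes by induction on $k$. The induction hypothesis is that $u_{\varepsilon}^{(k)}$ and $\nabla V_{\varepsilon}^{(k)}$ lie in $L^\infty(\Omega_\rho)$ with bounds $C_{u,1}$ and $C_{p,1}$. Assumption~\ref{Assum: initial bounded} gives the base case, provided the constants are chosen at least as large as the initial norms.

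\textbf{Step 1 (well-posedness of one evaluation step).} Under the induction hypothesis, the right-hand side $f_k(z):=\delta(z,u_{\varepsilon}^{(k)},\nabla V_{\varepsilon}^{(k)})$ of \eqref{rewritten policy evaluation} is a known function on $\Omega_\rho$. Since $\|u_\varepsilon^{(k)}\|\geq\rho'$ and $\|z\|$ is bounded on the bounded set $\Omega_\rho$, it satisfies
\[
|f_k|\leq (c_1\sup\|z\|+c_2C_{u,1})C_{p,1}+\frac{c_2^2C_{p,1}^2}{2\lambda_{\min}(R)}=:M_k .
\]
Equation \eqref{rewritten policy evaluation} then has the form
\[
-\varepsilon\nabla^2V+H(z,\nabla V)=f_k ,
\]
a semilinear uniformly elliptic equation with quadratic growth in the gradient. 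I would invoke the standard theory for such equations (Gilbarg--Trudinger, Ch.~10 and Thm.~15.12, or Ladyzhenskaya--Ural'tseva) through the following chain:
\begin{itemize}
\item an $L^\infty$ bound from the maximum principle, using the barrier $\frac{1}{2}z^\top Qz\ge 0$ and the sign structure of $H$;
\item a global gradient bound via Bernstein's method, which applies because of the natural quadratic growth condition $|H(z,p)|\le \mu(1+|p|^2)$;
\item H\"older estimates for $\nabla V$;
\item Leray--Schauder to get existence in $\mathcal{C}^{2,\alpha}$.
\end{itemize}
There is a regularity issue: $f_k$ is only $L^\infty$, while $\mathcal{C}^{2,\alpha}$ requires H\"older data. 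I would handle this inductively, since $V^{(k)}\in\mathcal{C}^{2,\alpha}$ makes $\nabla V^{(k)}$ H\"older. Then $u^{(k+1)}$, defined through \eqref{policy improvement}, is H\"older away from the singular sets; this is exactly why the punctured domain and the restriction $\|u\|\ge\rho'$ are used.

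Uniqueness follows from the comparison principle. For two solutions $V,W$, the difference satisfies the linear equation
\[
-\varepsilon\nabla^2(V-W)+b(z)\cdot\nabla(V-W)=0 ,
\]
where $b=Az-\frac{1}{2}B R^{-1}B^\top(\nabla V+\nabla W)$ is bounded. The weak maximum principle with zero boundary data then gives $V=W$.

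\textbf{Step 2 (uniform bounds, the main obstacle).} The gradient estimate from Step 1 depends on $M_k$, which grows with $C_{p,1},C_{u,1}$. To close the induction I need a self-map: if the $k$-th iterate obeys the bounds $(C_{p,1},C_{u,1})$, then so must the $(k+1)$-th. Concretely, let $G(M)$ be the gradient bound for data of size $M$. Since $M_k$ is at most quadratic in the constants, with coefficients proportional to $c_1$, $c_2$, $c_2^2$, I would first try to show
\[
G\big(M(C_{p,1},C_{u,1})\big)\le C_{p,1}
\]
for a suitable large $C_{p,1}$, using the smallness of $c_1,c_2$ for a well-identified lifting, as the earlier remark suggests. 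The control bound then follows directly from \eqref{policy improvement}:
\[
\|u^{(k+1)}\|\le \|R^{-1}B^\top(z)\|C_{p,1}+c_2C_{p,1}\|R^{-1}\| ,
\]
where $B(z)$ is bounded on $\Omega_\rho$, and $\|u^{(k)}\|$ cancels in the ratio $u^{(k)}/\|u^{(k)}\|$. Taking $C_{u,1}$ equal to this quantity, or larger to cover the initial policy, the two constants are chosen consistently and are independent of $k$, which completes the induction.

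If smallness of $c_1,c_2$ cannot be used, I would fall back on a comparison argument instead. The idea is to show that the positive term $\frac{1}{2}z^\top Qz$ and the coercive quadratic term $-\frac{1}{2}p^\top BR^{-1}B^\top p$ dominate $\delta$, which is linear in $\|p\|$ apart from a $c_2^2$ quadratic part that can be absorbed when $c_2^2/\lambda_{\min}(R)$ is small relative to the coercivity. That domination would give an a priori gradient bound depending only on the data of the problem.
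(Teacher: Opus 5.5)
Your Step~1 is the paper's argument: induction on $k$, $\delta^{(k)}\in L^\infty$ from the bounds on $u_\varepsilon^{(k)}$ and $\nabla V_\varepsilon^{(k)}$, then quasilinear elliptic existence theory (the paper just cites Gilbarg--Trudinger, Thm.~11.4). Your uniqueness argument is in fact the right source of uniqueness, because Thm.~11.4 is only an existence result. That argument linearizes with $b=Az-\frac12BR^{-1}B^\top(\nabla V+\nabla W)$ and applies the weak maximum principle.

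Where you genuinely depart from the paper is Step~2. The paper ends with ``consequently, the uniform boundedness naturally holds,'' i.e.\ it infers a $k$-independent bound from the boundedness of each iterate. That does not follow: the a priori gradient bound for $V_\varepsilon^{(k+1)}$ grows with $\|\delta^{(k)}\|_{L^\infty}$, which is quadratic in $\|\nabla V_\varepsilon^{(k)}\|_{L^\infty}$.

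Your invariant-ball argument is what actually gives uniformity, and it closes without circularity. Since $H$ does not involve $c_1,c_2$, the nondecreasing estimate function $G$ depends only on $\varepsilon,\Omega_\rho,A,B,Q,R$. Take
\begin{itemize}
\item $C_{p,1}=\max\{G(1),\|\nabla V_\varepsilon^{(0)}\|_{L^\infty}\}$;
\item $C_{u,1}=\max\{\|u_\varepsilon^{(0)}\|_{L^\infty},(\|R^{-1}B^\top\|_{L^\infty}+c_2\|R^{-1}\|)C_{p,1}\}$;
\item $c_1,c_2$ small enough that $M(C_{p,1},C_{u,1})\le 1$.
\end{itemize}
Your observation that $\|u_\varepsilon^{(k)}\|$ cancels in \eqref{policy improvement} then finishes the control bound. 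The price is a smallness condition on $c_1,c_2$, depending on $\varepsilon$ and the initial policy, which the lemma as stated omits. This is harmless, because Theorem~\ref{Thm: Convergence}, which the lemma serves, assumes it anyway. It should be added to the lemma rather than argued away.

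Your fallback, by contrast, does not work, for two reasons.
\begin{itemize}
\item $BR^{-1}B^\top$ has rank at most $m<N$, so $-\frac12p^\top BR^{-1}B^\top p$ is not coercive in $p$ and cannot absorb $c_2^2\|p\|^2$.
\item $\delta$ is evaluated at the previous iterate and is nonpositive, the same sign as that quadratic term. A very negative $\delta^{(k)}$ is therefore balanced by, not excluded by, large gradients in $\mathrm{Col}(B)$.
\end{itemize}
Without smallness, the only available recursion for the gradient bounds is of the form $a_{k+1}=G(c\,a_k^2)$, which can diverge.

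There are also two technical corrections to Step~1.
\begin{itemize}
\item Bernstein's method differentiates the equation and so needs $\nabla\delta^{(k)}$. With $L^\infty$ data you want a gradient estimate that does not differentiate the equation, such as the Ladyzhenskaya--Ural'tseva estimates for natural quadratic growth. This is also what makes $G$ depend only on $\|\delta^{(k)}\|_{L^\infty}$.
\item Your inductive H\"older argument has no base case under Assumption~\ref{Assum: initial bounded}. The update \eqref{policy improvement} carries the direction $u_\varepsilon^{(k)}/\|u_\varepsilon^{(k)}\|$ forward, so a discontinuity of $u_\varepsilon^{(0)}$ persists in every $u_\varepsilon^{(k)}$ and every $\delta^{(k)}$. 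Either assume H\"older initial data, or settle for $W^{2,p}$ strong solutions, which suffice for the gradient bounds. The paper's $\mathcal{C}^{2,\alpha}$ claim has the same defect.
\end{itemize}
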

See Appendix~\ref{Proof: Properties of Vkuk} for the proof.
\begin{lemma}\label{Lem: Properties of V*u*}
	For any fixed $\varepsilon>0$, there exists a unique classical solution $V_{\varepsilon}^*\in \mathcal{C}^{2,\alpha}(\Omega_\rho)$ for \eqref{rewritten intermediate} with homogeneous boundary conditions, which admits the existence of constants $C_{p,2}, C_{u,2}>0$ such that $\|\nabla V_{\varepsilon}^*\|_{ L^\infty(\Omega_\rho)} \leq C_{p,2}$ and $\|u_{\varepsilon}^*\|_{L^\infty(\Omega_\rho)}\leq C_{u,2}$.
\end{lemma}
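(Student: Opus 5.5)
We treat \eqref{rewritten intermediate} as a fixed-point problem built on the same machinery as Lemma~\ref{Lem: Properties of Vkuk}. We work on the punctured domain $\Omega_\rho$ with the homogeneous boundary conditions, and we restrict controls to $\mathbb{R}^m\backslash B_{\rho'}(0)$, so that $\delta(z,u,p)$ in \eqref{definition of H delta} is smooth in $(u,p)$ on the relevant set. The feedback $u_\varepsilon^*$ is defined implicitly by \eqref{robust optimal controller}. We first show that it is a well-defined, locally Lipschitz function of $(z,\nabla V_\varepsilon^*)$. Write the relation as $u=-R^{-1}B^\top(z)p-c_2\|p\|R^{-1}u/\|u\|$. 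For small $c_2$, the map $u\mapsto -R^{-1}B^\top p-c_2\|p\|R^{-1}u/\|u\|$ is a contraction on bounded sets away from $B_{\rho'}(0)$, because the Jacobian of $u/\|u\|$ is $\mathcal{O}(1/\rho')$. The implicit function theorem then gives $u=\mathcal{U}(z,p)$ with $\|\mathcal{U}(z,p)\|\le C(1+\|p\|)$. Substituting, \eqref{rewritten intermediate} becomes a single quasilinear elliptic equation $-\varepsilon\nabla^2 V+H(z,\nabla V)=\tilde\delta(z,\nabla V)$, where $\tilde\delta(z,p)=\delta(z,\mathcal{U}(z,p),p)$ has quadratic growth in $p$.

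Existence will come from Leray--Schauder (or Schauder) fixed-point theory for quasilinear elliptic Dirichlet problems. Given $W\in\mathcal{C}^{1,\beta}(\overline{\Omega_\rho})$, define $T(W)=V$ as the solution of
\begin{equation*}
-\varepsilon\nabla^2 V+H(z,\nabla V)=\tilde\delta(z,\nabla W).
\end{equation*}
This is exactly the policy-evaluation problem treated in Lemma~\ref{Lem: Properties of Vkuk}, so it has a unique $\mathcal{C}^{2,\alpha}$ solution. Compactness of $T$ follows from Schauder estimates and the compact embedding $\mathcal{C}^{2,\alpha}\hookrightarrow\mathcal{C}^{1,\beta}$. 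The essential input is an a priori bound for the family $V=\sigma T(V)$, $\sigma\in[0,1]$. A sup bound follows from the maximum principle, using barriers built from $\frac12 z^\top Qz$ and the quadratic term $-\frac12 p^\top BR^{-1}B^\top p$, which has a favorable sign. A gradient bound follows from a Bernstein-type argument: apply the maximum principle to $|\nabla V|^2$ (possibly multiplied by a cutoff or combined with $e^{\lambda V}$), and use boundary gradient bounds obtained from barriers at $\partial\Omega$ and $\partial B_\rho$. The latter must be compatible with the imposed conditions $V=0$, $\nabla V=0$ on $\partial B_\rho$. The Ladyzhenskaya--Ural'tseva/Gilbarg--Trudinger theory for quadratic-growth gradient nonlinearities then supplies Hölder bounds on $\nabla V$. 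Once $\|\nabla V_\varepsilon^*\|_{L^\infty}\le C_{p,2}$ is established, the bound $\|u_\varepsilon^*\|_{L^\infty}\le C_{u,2}$ follows from the growth estimate on $\mathcal{U}$.

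For uniqueness, take two solutions $V_1,V_2$ and set $w=V_1-V_2$. Then $w$ satisfies the linear equation $-\varepsilon\nabla^2 w+b(z)\cdot\nabla w=0$ with bounded coefficient $b$, obtained from the mean-value theorem applied to $H-\tilde\delta$. The coefficient is bounded because both gradients are bounded by $C_{p,2}$ and $\tilde\delta$ is locally Lipschitz in $p$. Since $w=0$ on $\partial\Omega_\rho$, the strong maximum principle gives $w\equiv0$. The main obstacle is the uniform gradient estimate in the Bernstein step. The gradient nonlinearity $\tilde\delta$ contains the non-smooth, normalized terms $\|p\|$ and $u/\|u\|$, and its sign is unfavorable because it adds cost. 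We would control it by treating it as a perturbation that is small relative to the coercive term $\frac12 p^\top BR^{-1}B^\top p$. This requires smallness of $c_1,c_2$ and uniform nondegeneracy of $B(z)R^{-1}B^\top(z)$ on $\Omega_\rho$. If that fails, we would instead inherit the uniform bound $C_{p,1}$ from Lemma~\ref{Lem: Properties of Vkuk} and obtain $V_\varepsilon^*$ as a $\mathcal{C}^{2,\alpha}$-limit of the iterates via Arzelà--Ascoli.
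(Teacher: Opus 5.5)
\textbf{Verdict: genuine gap.} Your proposal does not establish the uniform gradient bound, and the condition you would need for it fails in every Koopman lifting.

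\textbf{Relation to the paper.} The paper's proof is short. It notes that $H-\delta$ is $\mathcal{C}^1$ once $\|u\|\ge\rho'$, cites \cite[Theorem~10.2]{gilbarg2001elliptic} for uniqueness of the classical solution, and gets $C_{u,2}$ from the Lipschitz dependence of $u_\varepsilon^*$ on $\nabla V_\varepsilon^*$. It gives no argument for existence or for $C_{p,2}$. Your uniqueness step (linearize the difference of two solutions, then apply the maximum principle) and your bound on $u_\varepsilon^*$ are the same ideas. Your Leray--Schauder scheme goes further and tries to supply what the paper takes for granted.

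\textbf{The missing gradient bound.} You rightly identify the a priori gradient bound as the crux, but you never establish it. Your plan is to absorb the $c_2$-terms into $\frac12p^\top B(z)R^{-1}B^\top(z)p$ under uniform nondegeneracy. That $N\times N$ matrix has rank at most $m<N$ (in Section~\ref{6.numerical}, $m=1$ and $N=9$), so it is never nondegenerate in a Koopman lifting. The missing observation is that no absorption is needed. If $u$ solves \eqref{robust optimal controller}, i.e.\ $B^\top(z)p=-\bigl(R+\frac{c_2\|p\|}{\|u\|}I\bigr)u$, then expanding $p^\top B(z)R^{-1}B^\top(z)p$ gives
\begin{equation*}
H(z,p)-\delta(z,u,p)=p^\top Az+\tfrac12 z^\top Qz+c_1\|z\|\|p\|-\tfrac12 u^\top Ru .
\end{equation*}
Here $-\frac12u^\top Ru=\min_{v}\{p^\top B(z)v+\frac12v^\top Rv+c_2\|p\|\|v\|\}$, which lies in $[-\frac12p^\top B(z)R^{-1}B^\top(z)p,\,0]$. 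So the $c_2$-terms are not an unfavorable perturbation at all. Equation \eqref{rewritten intermediate} has the same one-signed, natural-growth structure as the nominal viscous HJB equation. The usual maximum-principle and Ladyzhenskaya--Ural'tseva estimates then give $C_{p,2}$ for every fixed $\varepsilon>0$, with no smallness or nondegeneracy assumption.

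\textbf{The implicit controller is not globally defined.} The map $v\mapsto p^\top B(z)v+\frac12v^\top Rv+c_2\|p\|\|v\|$ is strictly convex. Hence \eqref{robust optimal controller} has a solution $u\neq0$ if and only if $\|B^\top(z)p\|>c_2\|p\|$, so there is none for $p\in\ker B^\top(z)$. For $R=rI$ one finds $\|u\|=(\|B^\top(z)p\|-c_2\|p\|)/r$, so $\|u\|\ge\rho'$ fails whenever $\|B^\top(z)p\|<c_2\|p\|+r\rho'$. For such $p$ your map has no fixed point in $\{\|u\|\ge\rho'\}$, so it cannot be a self-contraction there. Consequently $\mathcal{U}(z,p)$ is undefined for many $p$, and both $T(W)$ and the linearized coefficient in your uniqueness step are undefined for general $W$. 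Replacing $\tilde\delta$ by the exact minimum above fixes this: it equals $0$ below the threshold and is locally Lipschitz in $p$ everywhere.

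\textbf{The boundary conditions cannot be made ``compatible''.} Leray--Schauder solves a Dirichlet problem. Your uniqueness step tacitly takes $V=0$ on all of $\partial\Omega_\rho$, although the paper prescribes nothing on $\partial\Omega$. For that problem the equation reads $\varepsilon\nabla^2V-\beta\cdot\nabla V=\frac12z^\top Qz$ with bounded $\beta$. The right-hand side is nonnegative and not identically zero, so the strong maximum principle and Hopf's lemma give $\partial_\nu V\neq0$ on $\partial B_\rho$. Independently, $\|u\|\le\|R^{-1}\|(\|B(z)\|+c_2)\|\nabla V\|$ shows that $\nabla V=0$ forces $u=0$, contradicting $\|u\|\ge\rho'$. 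So only the pure Dirichlet version of the lemma can be proved, and you should say so.

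\textbf{The fallback does not produce a solution.} Arzel\`a--Ascoli yields only subsequential limits of the iterates. Because $V_\varepsilon^{(k+1)}$ is driven by $(\nabla V_\varepsilon^{(k)},u_\varepsilon^{(k)})$, a limit solves \eqref{rewritten intermediate} only if $\nabla V_\varepsilon^{(k+1)}-\nabla V_\varepsilon^{(k)}\to0$. That requires a contraction estimate between consecutive iterates, hence small $c_1,c_2$, which the lemma does not assume.
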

See Appendix~\ref{Proof: Properties of V*u*} for the proof.
Lemma~\ref{Lem: Properties of Vkuk} has illustrated the well-posedness of proposed policy iteration algorithm, and Lemma~\ref{Lem: Properties of V*u*} has demonstrated properties of our desired limit case. To establish the convergence, we need the following result which characterizes the continuous dependence of PDE solution on the variation of right hand side in \eqref{rewritten policy evaluation} and \eqref{rewritten intermediate}.
\begin{proposition}\label{Prop: Continuous Dependence}
	There exists a constant $C_\varepsilon>0$ such that
	\begin{equation}\label{Continuous Dependence}
		\|\nabla V_\varepsilon^{(k+1)}-\nabla V_\varepsilon^*\|_{L^\infty(\Omega_\rho)}\leq C_\varepsilon \|\delta^{(k)}-\delta^*\|_{L^\infty(\Omega_\rho)}.
	\end{equation}
	where $\delta^{(k)}=\delta(z,u_{ro}^{(k)},\nabla V_{\varepsilon}^{(k)})$ and $\delta^*=\delta(z,u_{ro}^*,\nabla V_{\varepsilon}^*)$.
\end{proposition}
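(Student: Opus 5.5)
The plan is to subtract the two elliptic equations \eqref{rewritten policy evaluation} and \eqref{rewritten intermediate} and treat the difference $W=V_\varepsilon^{(k+1)}-V_\varepsilon^*$ as the solution of a \emph{linear} elliptic equation. Both functions lie in $\mathcal{C}^{2,\alpha}(\Omega_\rho)$ and satisfy the same homogeneous boundary conditions, by Lemma~\ref{Lem: Properties of Vkuk} and Lemma~\ref{Lem: Properties of V*u*}. Since $H(z,\cdot)$ is a quadratic polynomial in $p$, we have $H(z,p_1)-H(z,p_2)=b(z)^\top(p_1-p_2)$ with
\begin{equation*}
b(z)=A z-\tfrac12 B(z)R^{-1}B^\top(z)(p_1+p_2).
\end{equation*}
Here $p_1=\nabla V_\varepsilon^{(k+1)}$ and $p_2=\nabla V_\varepsilon^*$. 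Hence $W$ solves
\begin{equation*}
-\varepsilon\nabla^2 W+b(z)^\top\nabla W=\delta^{(k)}-\delta^*\ \text{in }\Omega_\rho,\qquad W=0\ \text{on }\partial\Omega_\rho .
\end{equation*}
The drift $b$ is bounded in $L^\infty(\Omega_\rho)$ uniformly in $k$. This follows because $\Omega_\rho$ is bounded, $B(z)$ is affine in $z$, and the gradients are bounded by $C_{p,1}$ and $C_{p,2}$ from the two lemmas. Let $\Lambda$ denote a bound on $\|b\|_{L^\infty}$ that depends only on $\varepsilon$, $\Omega$, the system matrices, $C_{p,1}$ and $C_{p,2}$.

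Next I would apply standard linear elliptic estimates for this uniformly elliptic operator with bounded first-order coefficients and no zeroth-order term. The first step is an $L^\infty$ bound from the Aleksandrov--Bakelman--Pucci / weak maximum principle \cite[Thm.~3.7 or 9.1]{gilbarg2001elliptic}:
\begin{equation*}
\|W\|_{L^\infty}\le C(\varepsilon,\Lambda,\mathrm{diam}\,\Omega)\,\|\delta^{(k)}-\delta^*\|_{L^\infty}.
\end{equation*}
The second step is the Calderón--Zygmund $W^{2,q}$ estimate with $q>N$ \cite[Thm.~9.13--9.15]{gilbarg2001elliptic}, which gives
\begin{equation*}
\|W\|_{W^{2,q}}\le C\big(\|W\|_{L^q}+\varepsilon^{-1}\|\delta^{(k)}-\delta^*\|_{L^q}\big).
\end{equation*}
On the bounded domain, the $L^q$ norms on the right are controlled by the corresponding $L^\infty$ norms. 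Then the Morrey/Sobolev embedding $W^{2,q}\hookrightarrow \mathcal{C}^{1,1-N/q}(\overline{\Omega_\rho})$ yields $\|\nabla W\|_{L^\infty}\le C_\varepsilon\|\delta^{(k)}-\delta^*\|_{L^\infty}$, which is \eqref{Continuous Dependence}. The constant $C_\varepsilon$ depends on $\varepsilon$, $\Lambda$, $q$ and the domain, but not on $k$. The boundary $\partial\Omega_\rho$ consists of $\partial\Omega$ and a sphere, so I would assume $\partial\Omega$ is $\mathcal{C}^{1,1}$ (or $\mathcal{C}^{2,\alpha}$) in order to use the global estimates up to the boundary.

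The main obstacle is making the constant independent of $k$ and of the solutions themselves. The linearization turns the nonlinearity into the drift $b$, which depends on the unknowns. The uniform gradient bounds of Lemmas~\ref{Lem: Properties of Vkuk} and \ref{Lem: Properties of V*u*} are therefore essential, and I would invoke them explicitly at that step.

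A secondary issue concerns the boundary conditions. The paper imposes both $V=0$ and $\nabla V=0$ on $\partial B_\rho$, which overdetermines a second-order problem. I would use only the Dirichlet condition $W=0$ in the estimates. The gradient condition is consistent with this, since it is inherited by $W$ whenever both solutions satisfy it.
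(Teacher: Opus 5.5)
Your proposal is correct and follows essentially the same route as the paper. You subtract the two equations and linearize $H$ in $p$; your explicit drift $b(z)=Az-\tfrac12 B(z)R^{-1}B^\top(z)(p_1+p_2)$ is exactly the paper's mean-value integral, since $H$ is quadratic in $p$. You then bound $b$ uniformly in $k$ via Lemmas~\ref{Lem: Properties of Vkuk} and~\ref{Lem: Properties of V*u*} and apply maximum-principle and $W^{2,q}$ elliptic estimates. The paper compresses your ABP, Calder\'on--Zygmund and Morrey-embedding steps into a single citation of the global estimate $\|W\|_{W^{2,p}}\le C\|LW\|_{L^p}$ for $W\in W^{2,p}\cap W^{1,p}_0$ (Gilbarg--Trudinger, Lemma~9.17). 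It leaves implicit the Sobolev embedding, the $\mathcal{C}^{1,1}$ boundary requirement and the $k$-independence of the constant, all of which your version states explicitly.
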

See Appendix~\ref{Proof: Continuous Dependence} for the proof. Building on the preceding results, we can now establish the convergence of Algorithm~1.
\begin{theorem}\label{Thm: Convergence}
	Let Assumption~\ref{Assum: initial bounded} holds. For any fixed $\varepsilon>0$ and sufficiently small error bound coefficients $c_1,c_2>0$, the sequences of value functions $\{V_{\varepsilon}^{(k)}\}_{k=0}^\infty$ and corresponding controllers $\{u_{\varepsilon}^{(k)}\}_{k=0}^\infty$ calculated by \eqref{policy evaluation} and \eqref{policy improvement} converge to $V_{\varepsilon}^*$ and $u_{\varepsilon}^*$ respectively. Moreover, as $\varepsilon\rightarrow0$, $V_{\varepsilon}^*\rightarrow V_{ro}^*$ where $V_{ro}^*$ is the unique viscosity solution of \eqref{rewritten robust HJI} (equivalently \eqref{robust HJI}), and therefore $u_{\varepsilon}^*\rightarrow u_{ro}^*$.
\end{theorem}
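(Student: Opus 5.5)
The plan has two stages: a contraction argument for fixed $\varepsilon$, followed by a vanishing-viscosity limit. For the first stage, fix $\varepsilon>0$ and set $e_k:=\|\nabla V_\varepsilon^{(k)}-\nabla V_\varepsilon^*\|_{L^\infty(\Omega_\rho)}$ and $w_k:=\|u_\varepsilon^{(k)}-u_\varepsilon^*\|_{L^\infty(\Omega_\rho)}$.

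Proposition~\ref{Prop: Continuous Dependence} gives $e_{k+1}\le C_\varepsilon\|\delta^{(k)}-\delta^*\|_{L^\infty}$. I will then bound $\|\delta^{(k)}-\delta^*\|$ by a local Lipschitz estimate for $\delta(z,u,p)$ from \eqref{definition of H delta} on the set where $\|p\|\le \max\{C_{p,1},C_{p,2}\}$ and $\rho'\le\|u\|\le\max\{C_{u,1},C_{u,2}\}$. This set is available because Lemmas~\ref{Lem: Properties of Vkuk}--\ref{Lem: Properties of V*u*} give uniform bounds, and $\|u\|\ge\rho'$ by the standing restriction. On this set, every term of $\delta$ carries a factor $c_1$ or $c_2$, and the maps $u\mapsto\|u\|$ and $u\mapsto u^\top R^{-1}u/\|u\|^2$ are Lipschitz away from $0$. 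Hence
\[
\|\delta^{(k)}-\delta^*\|_{L^\infty}\le (c_1+c_2+c_2^2)\,L_\delta\,(e_k+w_k),
\]
with $L_\delta$ depending only on $\rho,\rho'$, $R$, the diameter of $\Omega$, and the uniform constants.

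Next I will control $w_{k+1}$ using the policy update \eqref{policy improvement} and its fixed-point analogue for $u_\varepsilon^*$ coming from \eqref{robust optimal controller}. Taking the difference gives
\[
w_{k+1}\le \|R^{-1}\|\,\sup\|B(z)\|\,e_{k+1}+c_2\,L_u\,(e_{k+1}+w_k),
\]
where again the lower bound $\|u\|\ge\rho'$ makes $u\mapsto u/\|u\|$ Lipschitz. Combining the two estimates, the vector $(e_{k+1},w_{k+1})$ is bounded by a matrix $M(c_1,c_2)$ times $(e_k,w_k)$. The entries of $M$ involving $w_k$ and $e_k$ are all $\mathcal{O}(c_1+c_2)$, with a prefactor depending on $C_\varepsilon$ and $\sup\|B\|$. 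Therefore for $c_1,c_2$ sufficiently small (depending on $\varepsilon$) the spectral radius of $M$ is below $1$, so $e_k,w_k\to0$ geometrically. This yields $\nabla V_\varepsilon^{(k)}\to\nabla V_\varepsilon^*$ and $u_\varepsilon^{(k)}\to u_\varepsilon^*$ uniformly. Convergence $V_\varepsilon^{(k)}\to V_\varepsilon^*$ then follows by integrating from the boundary where $V=0$, since $\Omega_\rho$ is bounded; alternatively, Schauder estimates plus uniqueness in Lemma~\ref{Lem: Properties of Vkuk} upgrade it to $\mathcal{C}^{2}$ convergence on compact subsets.

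For the second stage, $\varepsilon\to0$, I will use the standard stability of viscosity solutions (Barles--Perthame, \cite{bardi1997optimal}). This requires that the bounds $C_{p,2},C_{u,2}$ of Lemma~\ref{Lem: Properties of V*u*} be uniform in $\varepsilon$; I would re-check that proof to see that its gradient bound comes from a barrier or Bernstein argument whose constant does not deteriorate as $\varepsilon\to0$. With that uniformity, $\{V_\varepsilon^*\}$ is equi-Lipschitz, and Arzel\`a--Ascoli gives a subsequence converging uniformly to some $\bar V$. The fixed-point equation \eqref{rewritten intermediate} is a degenerate elliptic equation in $(z,\nabla V,\nabla^2V)$, once $u_\varepsilon^*$ is expressed implicitly through $\nabla V_\varepsilon^*$ via \eqref{robust optimal controller} (solvable uniquely for small $c_2$ by the implicit function theorem on $\|u\|\ge\rho'$). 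The stability theorem then shows that $\bar V$ is a viscosity solution of \eqref{rewritten robust HJI}. A comparison principle for \eqref{rewritten robust HJI} on $\Omega_\rho$ with the given boundary data, which holds since the Hamiltonian is locally Lipschitz in $p$ and continuous in $z$ under the gradient bound, gives uniqueness. Consequently $\bar V=V_{ro}^*$ and the whole family converges. Finally, $u_\varepsilon^*\to u_{ro}^*$ follows from the continuous implicit map $p\mapsto u(p)$, provided $\nabla V_\varepsilon^*\to\nabla V_{ro}^*$ pointwise at points of differentiability. This is obtained from semiconcavity of $V_\varepsilon^*$ uniformly in $\varepsilon$, or at least holds almost everywhere.

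I expect the main obstacle to be the $\varepsilon$-uniformity. The contraction constant involves $C_\varepsilon$, which may blow up as $\varepsilon\to0$; this is acceptable for the first stage, since smallness of $c_1,c_2$ is allowed to depend on $\varepsilon$. The second stage, however, genuinely needs $\varepsilon$-independent gradient bounds and a comparison principle for the non-smooth Hamiltonian containing $\|p\|$ and $\|u\|$ terms. Gradient convergence, needed for the controller convergence, is the weakest link. I would first try to secure it through uniform semiconcavity estimates.
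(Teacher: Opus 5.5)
\textbf{Fixed-$\varepsilon$ stage.} This part is the paper's argument. Proposition~\ref{Prop: Continuous Dependence} plus a local Lipschitz bound on $\delta$ gives the first error inequality; this uses the uniform bounds of Lemmas~\ref{Lem: Properties of Vkuk}--\ref{Lem: Properties of V*u*} and $\|u\|\ge\rho'$. Subtracting the fixed-point form of \eqref{robust optimal controller} from \eqref{policy improvement} gives the second. The resulting nonnegative $2\times2$ comparison matrix (the paper's $E$) has $\mathcal{O}(c_1+c_2)$ entries, so its spectral radius is below one for $c_1,c_2$ small in an $\varepsilon$-dependent way. Recovering $V_\varepsilon^{(k)}\to V_\varepsilon^*$ from the gradients via the zero boundary data is a harmless detail the paper leaves implicit.

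\textbf{The gap: uniqueness as $\varepsilon\to0$.} The paper settles this stage only by citing vanishing-viscosity theory \cite{crandall1992users,Barles2013}. In unpacking that citation, you rest uniqueness, and hence the identification of every limit with $V_{ro}^*$, on a comparison principle that does not follow from the reason you give. Local Lipschitz continuity in $p$ and continuity in $z$ are not enough for a first-order equation $G(z,\nabla V)=0$ with no zeroth-order term. The standard comparison theorems need strict monotonicity in $V$, or extra structure such as convexity in $p$ plus a strict subsolution. For instance, $\max\{\|\nabla V\|-1,0\}=0$ with zero Dirichlet data on a bounded domain admits both $V\equiv0$ and the distance to the boundary as viscosity solutions. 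Equation \eqref{rewritten robust HJI} is undiscounted and mixes the concave quadratic $-\frac12 p^\top B(z)R^{-1}B^\top(z)p$ with convex $\|p\|$-terms. So the structure needed for comparison is something you would have to supply yourself.

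\textbf{Compactness and controller convergence.} Your compactness step is also only conditional. Lemma~\ref{Lem: Properties of V*u*} gives no $\varepsilon$-uniform bound, since its constants come from fixed-$\varepsilon$ elliptic theory. A Bernstein-type uniform gradient bound is not available either. The quadratic form $p\mapsto p^\top B(z)R^{-1}B^\top(z)p$ has rank at most $m$, typically far below $N$, so the Hamiltonian is not coercive in $p$. Switching to Barles--Perthame half-relaxed limits would need only $\varepsilon$-uniform $L^\infty$ bounds. But then everything hinges on a strong comparison principle up to the boundary, which is the same missing ingredient. Finally, $u_\varepsilon^*\to u_{ro}^*$ needs convergence of gradients, which viscosity stability does not provide. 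Uniform-in-$\varepsilon$ semiconcavity would be an additional estimate you must prove, not a consequence of the preceding steps; the paper's ``therefore'' glosses over the same point.
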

\begin{pf}
	With $\|\nabla V_{ro}^{(k+1)}\|-\|\nabla V_{ro}^{(k)}\|\leq\|\nabla V_{ro}^{(k+1)}-\nabla V_{ro}^{(k)}\|$ and the definition of $\delta$ by \eqref{definition of H delta}, we can prove that
	\begin{equation*}
		\begin{aligned}
			\|\delta^{(k)}-\delta^*\|_{L^\infty(\Omega_\rho)}&\leq \overline{L}_1\|\nabla V_{\varepsilon}^{(k)}-\nabla V_{\varepsilon}^*\|_{L^\infty(\Omega_\rho)} \\
			&+\overline{L}_2 \|u_{\varepsilon}^{(k)}-u_{\varepsilon}^*\|_{L^\infty(\Omega_\rho)}
		\end{aligned}
	\end{equation*}
	where
	\begin{equation*}
		\begin{aligned}
			\overline{L}_1&=\left(c_1\|z\|+c_2\|u_{\varepsilon}^{(k)}\|+\frac{c_2^2(\|\nabla V_{\varepsilon}^{(k)}\|+\|\nabla V_{\varepsilon}^*\|)}{2\lambda_{\min}(R)} \right)_{L^\infty(\Omega_\rho)} \\ &\leq c_1\|z\|_{L^\infty(\Omega_\rho)}+c_2C_{u,1} +\frac{c_2^2(C_{p,1}+C_{p,2})}{2\lambda_{\min}(R)}=L_1.
		\end{aligned}
	\end{equation*}
	and
	\begin{equation*}
		\begin{aligned}
			\overline{L}_2&=\left(c_2\|\nabla V_{\varepsilon}^*\|+\frac{c_2^2\|\nabla V_{\varepsilon}^*\|^2L_{u,1}}{2}\right)_{L^\infty(\Omega_\rho)} \\
			&\leq c_2C_{u,2}\left(1+\frac{c_2 C_{u,2}L_{u,1}}{2}\right)=L_2.
		\end{aligned}
	\end{equation*}
	Here $L_{u,1}$ is the Lipschitz constant of $u^\top R^{-1}u/\|u\|^2$ since $u\in\mathbb{R}^m\backslash B_{\rho'}(0)$ and we have proved the boundedness of $u_{\varepsilon}^{(k)}$ and $u_{\varepsilon}^*$. In addition, the inequality
	\begin{equation*}
		u^\top R^{-1}u\leq \lambda_{\max}(R^{-1})\|u\|^2=\frac{1}{\lambda_{\min}(R)}\|u\|^2
	\end{equation*}
	is used to obtain $L_1,L_2$. With Proposition~\ref{Prop: Continuous Dependence}, we obtain
	\begin{equation}\label{error inequality 1}
		\begin{aligned}
			\|\nabla V_{\varepsilon}^{(k+1)}-\nabla V_{\varepsilon}^*\|_{L^\infty(\Omega_\rho)} \leq C_\varepsilon L_1\|\nabla V_{\varepsilon}^{(k)}-\nabla V_{\varepsilon}^*\|_{L^\infty(\Omega_\rho)} \\ +C_\varepsilon L_2 \|u_{\varepsilon}^{(k)}-u_{\varepsilon}^*\|_{L^\infty(\Omega_\rho)}.
		\end{aligned}
	\end{equation}
	Combining \eqref{policy improvement} with \eqref{robust optimal controller}, we obtain
	\begin{equation*}
		\begin{aligned}
			u_{\varepsilon}^{(k+1)}&-u_{\varepsilon}^*=-R^{-1}B^\top(z)\left(\nabla V_{\varepsilon}^{(k+1)}-\nabla V_{\varepsilon}^*\right) \\
			&-c_2R^{-1}\left(\|\nabla V_{\varepsilon}^{(k+1)} \|\frac{u_{\varepsilon}^{(k)}}{\|u_{\varepsilon}^{(k)}\|}-\|\nabla V_{\varepsilon}^*\|\frac{u_{\varepsilon}^*}{\|u_{\varepsilon}^*\|}\right).
		\end{aligned}
	\end{equation*}
	Denote $L_{u,2}$ as the Lipschitz constant of $u/\|u\|$, then
	\begin{equation}\label{error inequality 2}
		\begin{aligned}
			\|u_{\varepsilon}^{(k+1)}-u_{\varepsilon}^*\|_{L^\infty(\Omega_\rho)}\leq L_4\|u_{\varepsilon}^{(k)}-u_{\varepsilon}^*\|_{L^\infty(\Omega_\rho)} \\
			+L_3\|\nabla V_{\varepsilon}^{(k+1)}-\nabla V_{\varepsilon}^*\|_{L^\infty(\Omega_\rho)}
		\end{aligned}
	\end{equation}
	where
	\begin{equation*}
		L_3=\|R^{-1}\|(c_2+\|B(z)\|_{L^\infty(\Omega_\rho)}), \ L_4=c_2C_{p,1}L_{u,2}\|R^{-1}\|.
	\end{equation*}
	Define $e_{k}=[\|\nabla V_{\varepsilon}^{(k)}-\nabla V_{\varepsilon}^*\|_{ L^\infty(\Omega_\rho)} \quad \|u_{\varepsilon}^{(k)}-u_{\varepsilon}^*\|_{L^\infty(\Omega_\rho)} ]^\top$. Then \eqref{error inequality 1} and \eqref{error inequality 2} can be written as
	\begin{equation*}
		\begin{bmatrix}
			1& 0 \\ -L_3& 1
		\end{bmatrix}e_{k+1}\leq\begin{bmatrix}
			C_\varepsilon L_1& C_\varepsilon L_2 \\ 0& L_4
		\end{bmatrix}e_{k},
	\end{equation*}
	which is equivalent as
	\begin{equation*}
		e_{k+1}\leq\begin{bmatrix}
			C_\varepsilon L_1& C_\varepsilon L_2 \\ C_\varepsilon L_1L_3& C_\varepsilon L_2L_3+L_4
		\end{bmatrix}e_{k}\triangleq Ee_{k}.
	\end{equation*}
	Note that all elements of $E$ are $\mathcal{O}(c_1+c_2)$, which means that they can be sufficiently small with appropriate $c_1,c_2$. When $\|E\|<1$ (or $|\lambda(E)|<1$), $\lim_{k\rightarrow\infty}e_{k}=0$, which further indicates that $V_{\varepsilon}^{(k)} \rightarrow V_{\varepsilon}^*$ and $u_{\varepsilon}^{(k)}\rightarrow u_{\varepsilon}^*$ in the sense of $L^\infty$ norm. Finally, the convergence $\lim_{\varepsilon\rightarrow0}V_\varepsilon^*=V_{ro}^*$ follows from the vanishing viscosity theory \cite[Section 6]{crandall1992users}\cite{Barles2013}, which completes the proof. \qed
\end{pf}

\section{Numerical Examples}\label{6.numerical}
In this section, numerical simulations using MATLAB are conducted to verify the theoretical results developed in the previous sections and to demonstrate the effectiveness of the proposed robust optimal control methodology as well as corresponding policy iteration algorithm.

Consider the optimal control problem of the control-affine nonlinear system
\begin{equation}\label{simulation original system}
	\begin{bmatrix}
		\dot{x}_1 \\ \dot{x}_2
	\end{bmatrix}=\begin{bmatrix}
		-x_1+x_2 \\ -\frac{1}{2}(x_1+x_2)+\frac{1}{2}x_1^2x_2
	\end{bmatrix}+\begin{bmatrix}
		0 \\ x_1
	\end{bmatrix}u
\end{equation}
where the weighting matrices of quadratic index \eqref{optimal control problem} are set as $\overline{Q}=I_{2\times2},R=1$. The analytical optimal value function and corresponding optimal controller are verifiable as $V^*(x)=\frac{1}{4}x_1^2+\frac{1}{2}x_2^2$ and $u^*(x)=-x_1x_2$, which serve as the ground truth for performance comparison.

The dynamics \eqref{simulation original system} is only used for collecting time-series data $\left\{x_j,\dot{x}_j\right\}_{j=1}^{T}$ with total amount $T=5000$ corrupted by the bounded noise $\overline{d}(t)=0.01\cdot[\cos(2\pi 0.4t)\ \sin(2\pi 0.4t)]$. The dictionary functions are constructed as monomials up to order $3$ of $x$, i.e., $\Psi(x)=[x_1\ x_2\ \frac{1}{2}x_1^2\ x_1x_2\ \frac{1}{2}x_2^2\ \ldots]^\top$. The lifted bilinear system of dimension $N=9$ is identified from the collected data via EDMD algorithm \eqref{identified Z}. Meanwhile, denote
\begin{equation*}
	R=[r_1\ r_2\ \cdots \ r_{T}]=Z_1-\left[A\ B_0\ B_1\ \cdots\ B_m\right]W_0
\end{equation*}
which records the approximation error at all data points, and
\begin{equation*}
	\begin{aligned}
		\tilde{R}=[\|r_1\| \ \|r_2\| \ \cdots \ \|r_{T}\|], 
		\tilde{Z}_0=[\|z_1\| \ \|z_2\| \ \cdots \ \|z_{T}\|],\\
		\tilde{U}_0=[\|u_1\| \ \|u_2\| \ \cdots \ \|u_{T}\|].
	\end{aligned}
\end{equation*}
The coefficients of approximation error bound \eqref{error bound} can then be solved from data via linear programming under the constraint
\begin{equation}
	\tilde{R}\leq c_1\tilde{Z}_0+c_2\tilde{U}_0,
\end{equation}
and we obtain the error bound coefficients $c_1=c_2=0.1435$.

To numerically compute the solution of policy evaluation PDE \eqref{policy evaluation}, the Galerkin method is adopted, which is widely used for solving nonlinear PDE by projecting the original equation onto a finite-dimensional function space \cite{bertsekas1995dynamic}. Specifically, the value function is approximated using a set of linearly independent basis functions $\{\varphi_i(z)\}_{i=1}^M$, i.e.,
\begin{equation}\label{Galerkin approximation}
	V_\varepsilon^{(k+1)}(z) \approx \sum_{i=1}^{M} \theta^{(k+1)}_i \varphi_i(z)\triangleq {\theta^{(k+1)}}^\top \varphi(z).
\end{equation}
Substituting \eqref{Galerkin approximation} into the policy evaluation PDE \eqref{policy evaluation} yields a residual, which is then minimized in a least-squares sense over a set of $N_c=5000$ randomly sampled collocation points. In this way, solving the nonlinear PDE \eqref{policy evaluation} is reduced to a tractable algebraic problem with regard to $\theta\in\mathbb{R}^M$ and consequently, Algorithm 1 can be effectively implemented. In our simulation, the basis functions $\{\varphi_i(z)\}_{i=1}^M$ are chosen as monomials of $z$ with order $2$. Since the lifted coordinates $z$ already contain monomials of the original state variable $x$, some candidate basis functions become linearly dependent. Such dependencies are removed from $\varphi(z)$ to avoid the algebraic singularity in the least-square regression and ensure the convergence of policy iteration, reducing the number of basis function to $M=25$. The diffusion parameter for elliptic regularization in \eqref{policy evaluation} is set as $\varepsilon=10^{-3}$.

\begin{figure}[htbp]
	\centering
	\includegraphics[width=0.45\textwidth]{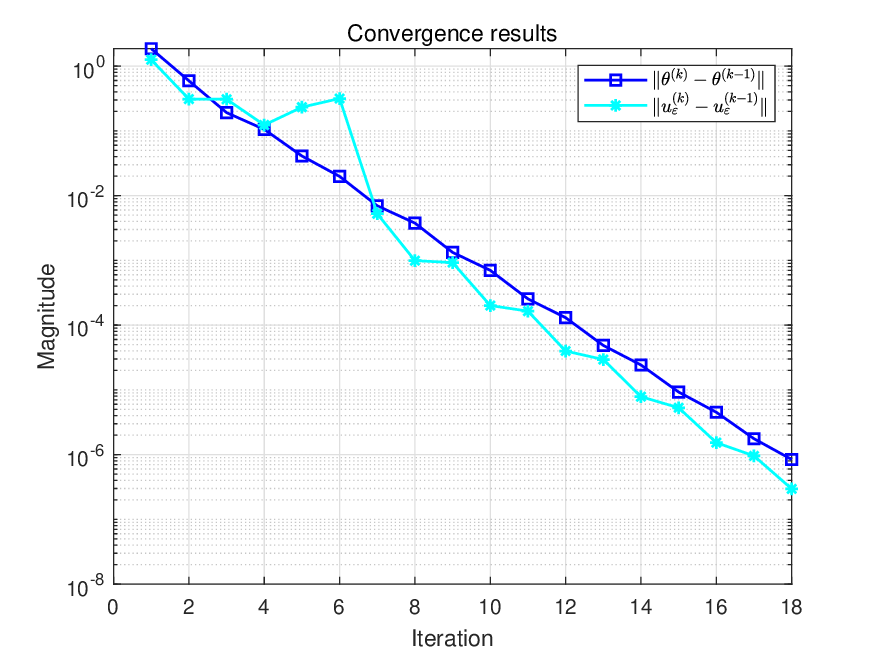}
	\caption{The relative changes of value function coefficients and the control policy at each iteration step, which shows the convergence of Algorithm~1. The entire computation finishes in less than one second.}
	\label{fig:convergence}
\end{figure}
\begin{figure}[htbp]
	\centering
	\includegraphics[width=0.48\textwidth]{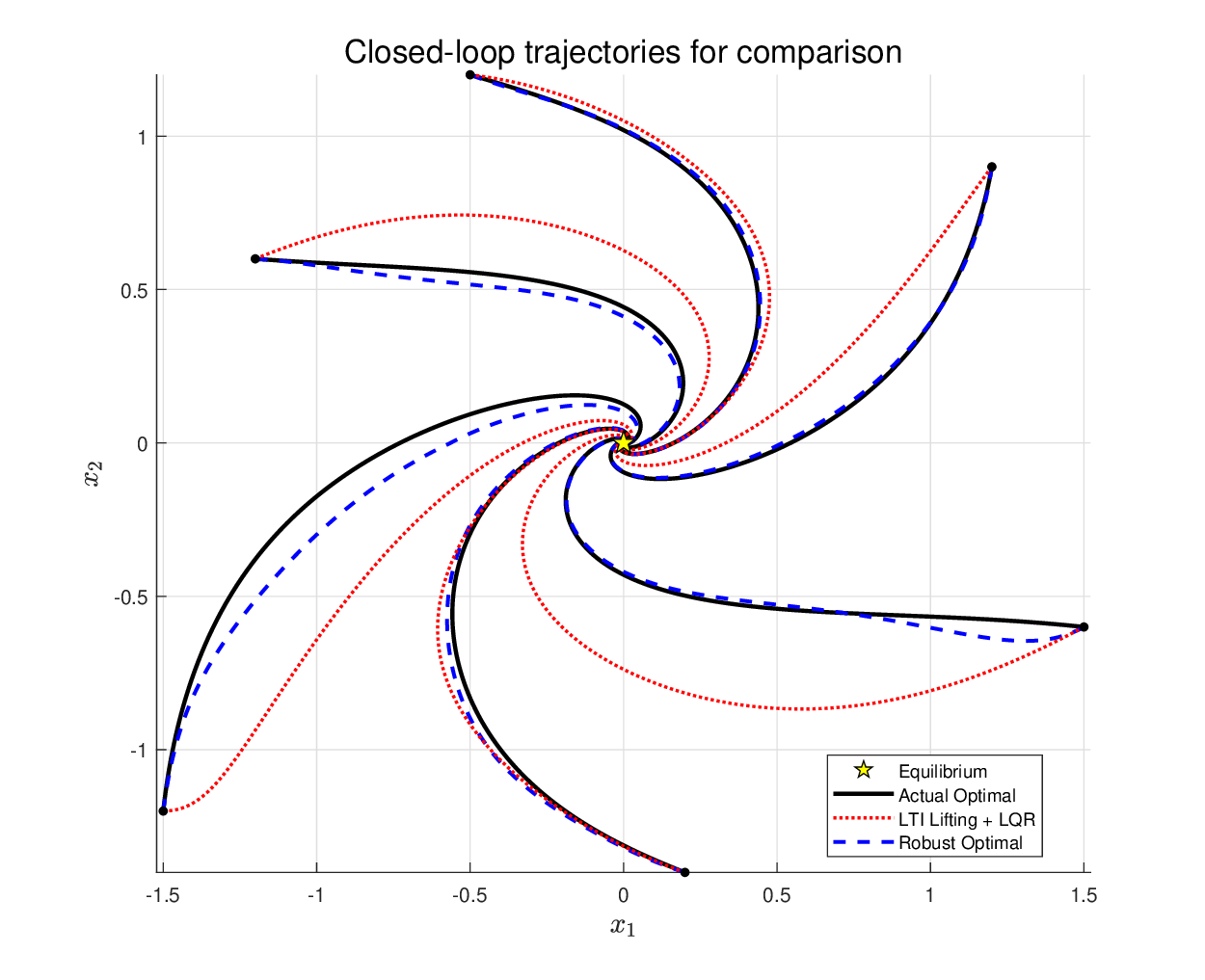}
	\caption{Comparison of closed-loop trajectories starting from 6 randomly chosen initial points. Three control strategies are evaluated, i.e., the actual optimal controller $u^*$ (black solid), standard LTI lifting then LQR (red dotted), and the robust optimal controller $u_{ro}^*(u_\varepsilon^*)$. All trajectories illustrate the convergence behavior toward the equilibrium.}
	\label{fig:trajectory}
\end{figure}

Starting from an initial policy, coefficients $\theta^{(k)}$ of the value function and the control policy $u_\varepsilon^{(k)}$ are iteratively updated according to the proposed policy iteration algorithm. As shown in Fig.~\ref{fig:convergence}, the solutions converge rapidly within 20 iterations, during which the variations of coefficients and control policy decrease significantly.

As illustrated in Fig.~\ref{fig:trajectory}, closed-loop trajectories controlled by the resulting robust optimal controller $u_{ro}^*(u_\varepsilon^*)$ closely track the actual optimal paths. Meanwhile, closed-loop trajectories generated by the LQR optimal controller based on the extensively utilized LTI lifted models are given for comparison. As discussed in Remark~\ref{Rem: LTI lifting}, the underlying severe modeling error in LTI lifting leads to substantial optimality deviations. Quantitatively, the performance costs of different control strategies are summarized in Table~\ref{table_performance_comparison}, which shows that our methodology and corresponding algorithm successfully mitigate the optimality deviations and recover near-optimal performance.

\begin{table}[!t]
	\renewcommand{\arraystretch}{1.3} 
	\centering
	\caption{Comparison of Performance Costs}
	\label{table_performance_comparison}
	\begin{tabular}{@{}
			p{1.8cm} 
			>{\centering\arraybackslash}p{0.8cm} 
			>{\centering\arraybackslash}p{0.8cm} 
			>{\centering\arraybackslash}p{0.8cm} 
			>{\centering\arraybackslash}p{1cm} 
			>{\centering\arraybackslash}p{1cm} @{} 
		}
		\toprule
		\multirow{2}{*}{\textbf{Initial State}} 
		& \multicolumn{3}{c}{\textbf{Cumulative Cost} $J$} 
		& \multicolumn{2}{c}{\textbf{Relative Extra}} \\
		\cmidrule(r){2-4} \cmidrule(l){5-6}
		& Actual & Robust & LQR & Robust & LQR \\
		\midrule
		$(-1.5,-1.2)$          & 1.2999 & 1.3494 & 3.0609 & 3.81\%  & 138.67\% \\
		$(-0.5,\phantom{-}1.2)$& 0.7877 & 0.8032 & 0.8487 & 3.91\%  & 7.74\%  \\
		$(\phantom{-}1.5,-0.6)$& 0.7511 & 0.7735 & 0.9346 & 2.98\%  & 24.43\% \\
		$(\phantom{-}1.2,\phantom{-}0.9)$& 0.7736 & 0.7776 & 1.0568 & 0.05\% & 36.61\% \\
		$(\phantom{-}0.2,-1.4)$& 0.9952 & 1.0454 & 1.1131 & 5.04\%  & 11.85\% \\
		$(-1.2,\phantom{-}0.6)$& 0.5458 & 0.5528 & 0.6371 & 1.28\%  & 16.73\% \\
		\midrule
		\textbf{Average} & -- & -- & -- & \textbf{2.85\%} & \textbf{39.34\%} \\
		\bottomrule
	\end{tabular}
\end{table}
In Remark~\ref{Rem: Tradeoff}, we have demonstrated a trade-off between the nominal performance and optimality robustness inherent in the proposed robust optimal control methodology. As visualized in Fig.~\ref{fig:Tradeoff}, the left panel indicates a relatively minor nominal performance loss, while the right one reveals a significantly larger optimality robustness improvement achieved by $u_{ro}^*$. In addition, a comparison of optimality deviations associated with $u_{ro}^*$ and $u_0^*$ is demonstrated in Fig.~\ref{fig:deviation comparison}. It is observed that the optimality deviation from the true optimal controller $u^*$ is effectively mitigated by our robust optimal control methodology, demonstrating an improved optimality robustness.

\begin{figure}[htbp]
	\centering
	\includegraphics[width=0.52\textwidth]{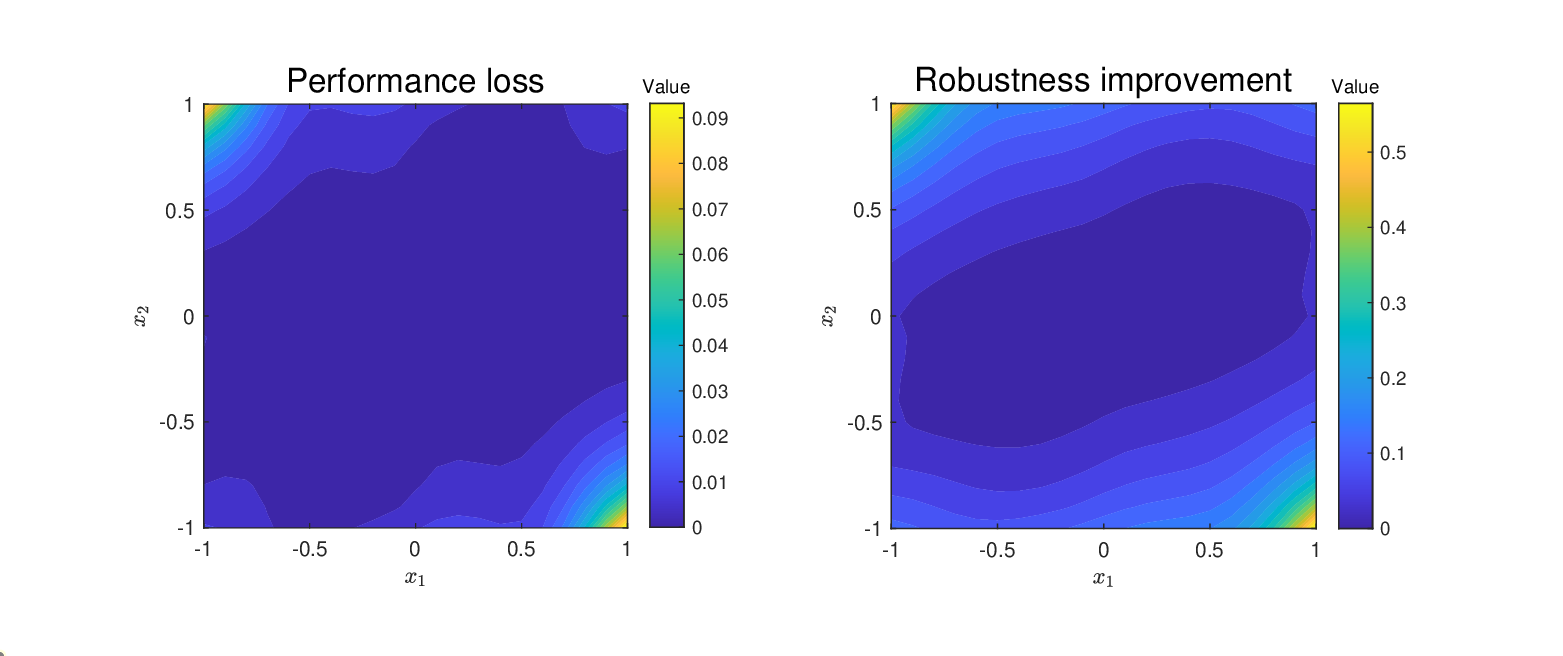}
	\caption{Quantification of performance-robustness trade-off described in Theorem~\ref{Thm: Robust Performance} and explained in Remark~\ref{Rem: Tradeoff}. The left panel displays the nominal performance loss of $u_{ro}^*$, given by \eqref{extra cost uro}. The right panel illustrates the potential optimality degradation of $u_0^*$ as well as the robustness improvement of $u_{ro}^*$, given by \eqref{performance improvement uro}.}
	\label{fig:Tradeoff}
\end{figure}
\begin{figure}[htbp]
	\centering
	\includegraphics[width=0.52\textwidth]{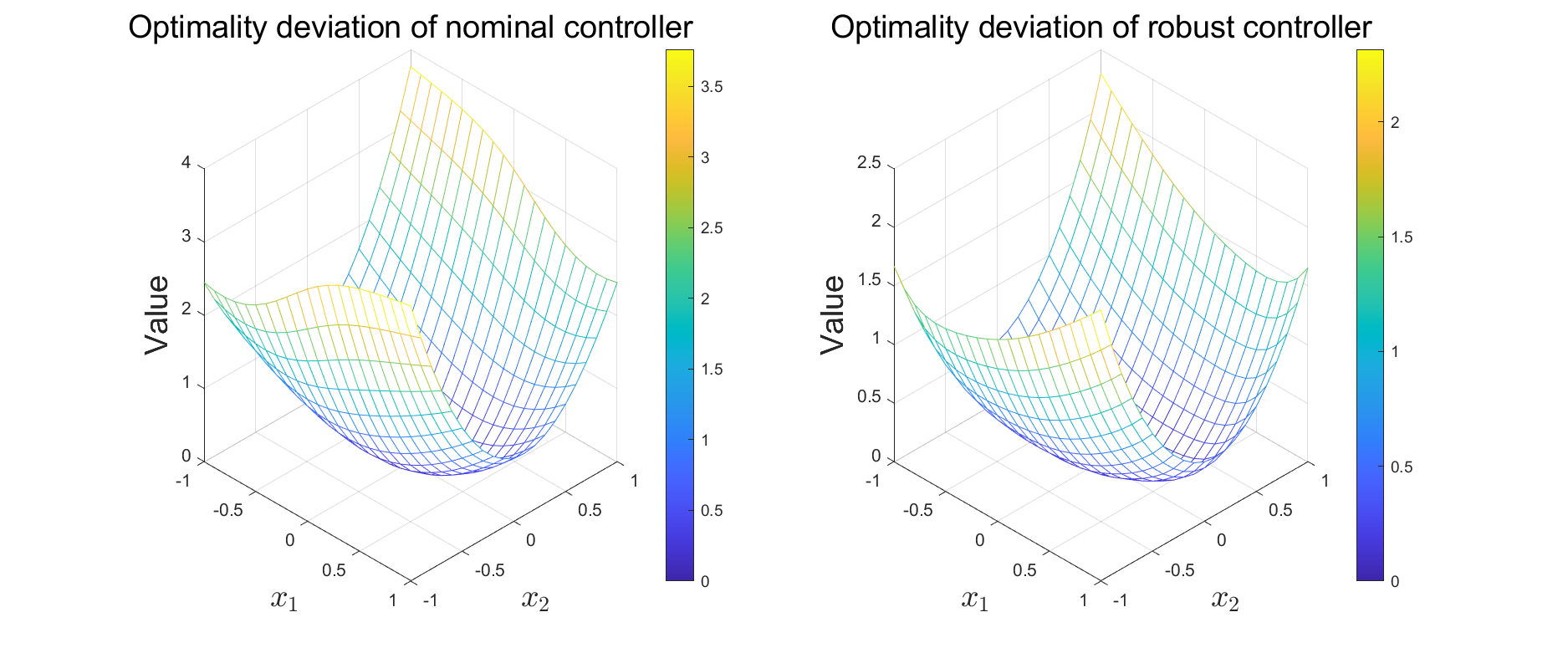}
	\caption{Comparison of the optimality deviations of the robust optimal controller $u_{ro}^*$ ($u_\varepsilon^*$, left) and the nominal one $u_0^*$ (right), respectively characterized in Theorems~2 and 5.}
	\label{fig:deviation comparison}
\end{figure}

In the present implementation, the Galerkin method is employed to compute the solution of PDE \eqref{policy evaluation}. While it provides an effective realization, the integration of more sophisticated numerical techniques may further improve computational efficiency and accuracy, which will be explored in future.


\section{Conclusions}\label{7.conclusion}
This paper has systematically studied optimality robustness in Koopman-based data-driven control subject to multi-source uncertainties. By developing a unified analytical framework, we have characterized how heterogeneous uncertainties affect optimal control performance and established principled mechanisms for mitigating the resulting optimality deviations. Our results provide a systematic analysis-to-design perspective for Koopman-based control that complements existing stability-oriented robustness theories with explicit optimality-oriented guarantees.

Beyond the methodology developed in this work, the underlying framework offers a general foundation for investigating optimality robustness in data-driven control. Future research directions include extending the present analysis to more general problem formulations or uncertainty structures, integrating adaptive and learning-based mechanisms for computation, and exploring applications in networked systems. In addition, the introduced analytical insights and techniques may facilitate the study of optimality robustness in broader classes of learning-enabled and model-based control architectures.
\begin{ack}                               
This work was financially supported by the National Natural Science Foundation of China (NSFC) under grants T2121002, U24A20266, and 62173006.  
\end{ack}

\bibliographystyle{plain}        
\bibliography{Automatica_References}           

\appendix
\section{Proof of Proposition~1}\label{Proof: Set Reformulation}
The full row rank of $W_0$ ensures $\mathbf{A}=W_0W_0^\top\succ 0$, whose pseudo-inverse satisfies $W_0^\dagger=W_0^\top (W_0W_0^\top)^{-1}$. For any element in $\mathcal{C}_0$, $\tilde{\mathbf{Z}}^\top=[A\ B_0\ B_1\ \ldots\ \ B_m]=(Z_1)W_0^\dagger=\boldsymbol{\zeta}^\top -D_0W_0^\dagger$, we have the following relation
\begin{equation}\label{C0 in C}
	(\tilde{\mathbf{Z}}-\boldsymbol{\zeta})^\top \mathbf{A} (\tilde{\mathbf{Z}}-\boldsymbol{\zeta})=D_0 W_0^\top(W_0W_0^\top)^{-1}W_0D_0^\top.
\end{equation}
Define $\mathbf{Q}_p=W_0^\top(W_0W_0^\top)^{-1}W_0\in\mathbb{R}^{T\times T}$, which is actually a projection matrix. We can prove that eigenvalues of $\mathbf{Q}_p$ are $0$ and $1$. Specifically, for $\lambda_p=0$, any non-zero vector $v\in\mathrm{Nul}(W_0)$ is an eigenvector of $\mathbf{Q}_p$. For $\lambda_p=1$, any non-zero vector $v\in\mathrm{Col}(W_0^\top)$ is an eigenvector of $\mathbf{Q}_p$, since any $0\neq v=W_0^\top w\in\mathrm{Col}(W_0^\top)$ allows $Q_pv=W_0^\top w=v$. Meanwhile, $\mathbb{R}^T=\mathrm{Col}(W_0^\top)\oplus\mathrm{Nul}(W_0)$ means that $\mathbf{Q}$ has no other eigenvalue except $0$ and $1$. Then $\mathbf{Q}_p$ can be bounded with identity matrix, i.e., $Q_p\preceq I$. With \eqref{C0 in C}, we obtain
\begin{equation*}
	(\tilde{\mathbf{Z}}-\boldsymbol{\zeta})^\top \mathbf{A} (\tilde{\mathbf{Z}} -\boldsymbol{\zeta})\preceq D_0D_0^\top\preceq\Delta\Delta^\top=\mathbf{Q},
\end{equation*}
i.e., any $\tilde{\mathbf{Z}}^\top\in\mathcal{Z}_0$ belongs to $\mathcal{Z}$. The equivalent relation $\mathcal{Z}=\mathcal{E}$ follows \cite[Proposition 1]{BISOFFI2022Petersen}. The proof is completed.

\section{Proof of Lemma~1}\label{Proof: Properties of Vkuk}
The proof is derived by mathematical induction. For $k=0$, Assumption~\ref{Assum: initial bounded} implies that $\delta(z,u_\varepsilon^{(0)},\nabla V_{\varepsilon}^{(0)})\in L^\infty$. Further, the definition \eqref{definition of H delta} admits that $H(z,p)$ can be bounded with $\|p\|$. As the elliptic PDE \eqref{rewritten policy evaluation} is quasilinear after introducing the vanishing viscosity regularization $\varepsilon \nabla^2$, the elliptic PDE theory \cite[Theorem 11.4]{gilbarg2001elliptic} ensures the existence of a unique classical solution $V_\varepsilon^{(1)}\in\mathcal{C}^{2,\alpha}(\Omega_\rho)$, and $u^{(1)}\in L^\infty$ holds with \eqref{policy improvement}. Suppose $u_{\varepsilon}^{(k)}\in L^\infty$ and $\nabla V_{\varepsilon}^{(k)}\in L^\infty$, then $\delta(z,u_\varepsilon^{(k)},\nabla V_{\varepsilon}^{(k)})\in L^\infty$. Hence $u_{\varepsilon}^{(k+1)}\in L^\infty$ is similarly ensured together with the uniqueness of classical solution $V_{\varepsilon}^{(k+1)}\in \mathcal{C}^{2,\alpha}(\Omega_\rho)\subseteq L^\infty (\Omega_\rho)$ and \eqref{policy improvement}. Consequently, the uniform boundedness naturally holds.

\section{Proof of Lemma~2}\label{Proof: Properties of V*u*}
We rewrite \eqref{rewritten intermediate} as
\begin{equation*}
	-\varepsilon\nabla^2 V_{\varepsilon}^*+H(z,\nabla V_{\varepsilon}^*)-\delta(z, u_{\varepsilon}^*,\nabla V_{\varepsilon}^*)=0
\end{equation*}
where $H(z,p)-\delta(z,u,p)$ is continuously differentiable as $\|u\|\geq\rho'$. Then the uniqueness of classical solution $V_{\varepsilon}^*\in C^{2,\alpha}(\Omega_\rho)$ is ensured by the elliptic PDE theory \cite[Theorem 10.2]{gilbarg2001elliptic}. A key observation on the robust optimal controller given by \eqref{robust optimal controller} is that $u_{\varepsilon}^*$ is Lipschitz continuous on $\nabla V_{\varepsilon}^*$ since we assume $u\in\mathbb{R}^m\backslash B_\rho'(0)$. Hence, there exists $C_{u,2}>0$ such that $\|u_{\varepsilon}^*\|_{L^\infty(\Omega_\rho)}\leq C_{u,2}$.

\section{Proof of Proposition~2}\label{Proof: Continuous Dependence}
Denote $W^{(k+1)}=V_\varepsilon^{(k+1)}-V_\varepsilon^*$. Combining \eqref{rewritten policy evaluation} and \eqref{rewritten intermediate}, $W^{(k+1)}$ satisfies
\begin{equation*}
	-\varepsilon\nabla^2 W^{(k+1)} + b^\top(z)\cdot\nabla W^{(k+1)}=\delta^{(k)}-\delta^*.
\end{equation*}
where
\begin{equation*}
	b(z)=\int_0^1 \frac{\partial H}{\partial p}\left(\nabla V_\varepsilon^*+s\left(\nabla V_\varepsilon^{(k+1)}-\nabla V_\varepsilon^*\right)\right)ds
\end{equation*}
is obtained by mean value theorem for multivariate functions. By Lemmas~\ref{Lem: Properties of Vkuk} and \ref{Lem: Properties of V*u*} together with \eqref{definition of H delta}, $b(z)$ is bounded. The dependence of $\nabla W^{(k+1)}$ on $\delta^{(k)}-\delta^*$ follows from the maximum principle and elliptic estimates \cite[Lemma 9.17]{gilbarg2001elliptic}, which completes the proof.

\end{document}